\definecolor{mylinkcolor}{rgb}{0,0,0.8} 
\newcommand{\comment}[1]{}
\newcommand{\ket}[1]{| #1 \rangle}
\newcommand{\bra}[1]{\langle #1 |}
\newcommand{\ketbra}[2]{|#1\rangle\!\langle#2|}
\newcommand{\proj}[1]{\ketbra{#1}{#1}}
\newcommand{\tr}{{\rm tr}}
\newcommand{\cL}{\mathcal{L}}
\newcommand{\cP}{\mathcal{P}}
\newcommand{\cQ}{\mathcal{Q}}
\newcommand{\cS}{\mathcal{S}}
\newcommand{\cNS}{\mathcal{NS}}
\newcommand{\rL}{\mathrm{L}}
\newcommand{\ot}{\otimes}
\newcommand{\eps}{\epsilon}
\theoremstyle{plain}
\newtheorem{theorem}{Theorem}
\newtheorem{lemma}[theorem]{Lemma}
\newtheorem{corollary}[theorem]{Corollary}
\theoremstyle{definition}
\newtheorem{definition}{Definition}
\newtheorem{remark}{Remark}
\begin{document}

\title{Bell Inequalities From No-Signalling Distributions} 
\date{$27^{\text{th}}$ June 2019}

\author{Thomas \surname{Cope}}
\email[]{thomas.cope@york.ac.uk}
\affiliation{Department of Mathematics, University of York,
  Heslington, York, YO10 5DD, UK}
\author{Roger \surname{Colbeck}}
\email[]{roger.colbeck@york.ac.uk}
\affiliation{Department of Mathematics, University of York,
  Heslington, York, YO10 5DD, UK}

\begin{abstract}
  A Bell inequality is a constraint on a set of correlations whose
  violation can be used to certify non-locality.  They are
  instrumental for device-independent tasks such as key distribution
  or randomness expansion. In this work we consider bipartite Bell
  inequalities where two parties have $m_A$ and $m_B$ possible inputs
  and give $n_A$ and $n_B$ possible outputs, referring to this as the
  $(m_A,m_B,n_A,n_B)$ scenario. By exploiting knowledge of the set of
  extremal no-signalling distributions, we find all 175 Bell
  inequality classes in the $(4,4,2,2)$ scenario as well as providing
  a partial list of 18277 classes in the $(4,5,2,2)$ scenario. We also
  use a probabilistic algorithm to obtain 5 classes of inequality in
  the $(2,3,3,2)$ scenario, which we confirmed to be complete, 25
  classes in the $(3,3,2,3)$ scenario, and a partial list of 21170
  classes in the $(3,3,3,3)$ scenario.  Our inequalities are given in
  supplementary files.  Finally, we discuss the application of these
  inequalities to the detection loophole problem, and provide new
  lower bounds on the detection efficiency threshold for small numbers
  of inputs and outputs.
\end{abstract}

\maketitle

\section{Introduction}
Bell inequalities~\cite{B1964} can be thought of as constraints on the
set of correlations realisable at spacelike separation by using shared
classical randomness and freely chosen measurements.  One of the most
counterintuitive features of quantum theory is that by sharing quantum
systems instead of classical randomness, these inequalities can be
violated, a fact that has been subject to extensive experimental
investigation~\cite{Aspect81,Tittel1998,Giustina&,Hensen&,Shalm&}.
This curious feature has since been used for
cryptography~\cite{Ekert91} and shown to enable device-independent
information processing tasks, such as quantum key
distribution~\cite{MayersYao,BHK,AGM,VV2} and randomness
expansion~\cite{ColbeckThesis,PAMBMMOHLMM,CK2,MS1}.  In essence, that
shared classical randomness cannot explain the observed correlations
implies that any eavesdropper can have only limited information about
them, and so key or random numbers can be distilled.

In spite of their usefulness, relatively little is known about the set
of Bell inequalities in all but the simplest cases.  In part, this is
due to the complexity of finding them and the fact that the number of
such inequalities grows rapidly as the number of inputs or outputs is
increased.  Bell inequalities can be thought of in a geometric way as
the facets of the polytope of local (classically realisable)
correlations. This insight means that Bell inequalities can in
principle be found by facet enumeration, a well-known problem in
polytope theory.  However, because it quickly becomes intractable,
this method has only been applied to enumerate all Bell inequalities in
simple cases.

In this paper we propose some alternative algorithms. These algorithms
are relatively easy to run, although they have the disadvantage that
they don't give a certificate when all the Bell inequalities have been
found.  In several cases we have been able to enumerate the complete
list of Bell inequality classes\footnote{Two Bell inequalities are in
  the same class if they are related by a relabelling.}, whilst in
other cases we find lower bounds on the total number of classes of
Bell inequality, which informs the complexity of doing facet
enumeration in these cases. Our approach utilises knowledge of the set
of extremal no-signalling distributions, as opposed to techniques
based on enumerating the facets of simpler, related
polytopes~\cite{BGP2010, PV2009}, or shelling techniques which travel
along the polytope's edges~\cite{PV2009}.

In particular, we provide a complete list of the 175 classes of Bell
inequalities for the $(4,4,2,2)$ scenario, normalized in such a way as
to provide easy comparison. The number of inequalities for this
scenario was already known~\cite{DS2015} but a useable list was not
provided, unlike for the $(2,2,2,2)$~\cite{CHSH1969,F1981},
$(2,m,2,n)$~\cite{P2004}, $(3,3,2,2)$~\cite{F1981,S2003,CG2004},
$(3,4,2,2)$~\cite{CG2004}, $(3,5,2,2)$~\cite{CG2004,QVB2014} and
$(2,2,3,3)$~\cite{CGLMP2002,KKCZO2002} scenarios, which we summarize
in Table~\ref{solvedcases}.

In addition we investigate a number of other scenarios.  In the
$(2,3,3,2)$ scenario we find 5 classes of Bell inequality, which we
confirmed to be complete using facet enumeration.  In the $(3,3,2,3)$
scenario we find 25 classes of Bell inequality, which may be
complete. In the $(4,5,2,2)$ scenario we find 18277 classes of
inequality. Finally, in the $(3,3,3,3)$ scenario we find 21170
classes, adding substantially to the set of 19 known classes for this
scenario (one was found in~\cite{CG2004} and 18
in~\cite{SBSL2016}). In the last two cases, the sets of classes found
are incomplete.  Explicit representatives of each class are given as
supplementary files.

\begin{table}
\begin{tabular}{c|c|c|c}
Scenario & Number of Inequality Classes & Number of Facets & Reference \\
\hline
$(2,2,2,2)$ & 2 & 24 &\cite{CHSH1969,F1981}\\
$(2,m_B,2,n_B)$ & 2 &$2(2^{n_B}-2)(m_B^2-2)+4n_Bm_B$  &\cite{P2004}\\
$(2,2,3,3)$ & 3 & 1116 &\cite{CGLMP2002,KKCZO2002,CG2004}\\
$(3,3,2,2)$ & 3 & 684 &\cite{F1981,S2003,CG2004} \\
$(2,3,3,2)$ & 5 & 1260 & Sec.~\ref{2332Sec} \\
$(3,4,2,2)$ & 6  & 12480 &\cite{DS2015,CG2004} \\
$(3,5,2,2)$ & 7  & 71340 &\cite{DS2015,QVB2014} \\
$(4,4,2,2)$ & 175 & 36391264 &\cite{DS2015}, Sec.~\ref{4422Sec} \\
\hline
$(3,3,2,3)$ & $\geq 25$ & $\geq 252558$ & Sec.~\ref{3323Sec}\\
$(4,5,2,2)$ & $\geq 18277$ && Sec.~\ref{4522Sec}\\
$(3,3,3,3)$ & $\geq 21170$ && Sec.~\ref{3333Sec}
\end{tabular}
\caption{Known Bell inequality classes for bipartite scenarios. For
  all scenarios, one of the classes always corresponds to the trivial
  positivity inequalities. Those in the upper section are known to be
  complete, whilst for the final three we provide lower bounds on the
  number of classes. We suspect the $(3,3,2,3)$ may be complete (see
  Sec.~\ref{3323Sec} for details).}\label{solvedcases}
\end{table}
The structure of the paper is as follows. In Section~\ref{sec:prelim}
we give an overview of relevant polytope theory and the related topic of
linear programming, before discussing Bell inequalities and detailing
our representation of them. In Section~\ref{sec:gen}, we discuss how
to exploit knowledge of the set of extremal no-signalling
distributions to obtain new Bell inequalities, as well as a technique
for doing so without such knowledge. Section~\ref{sec:results} then
gives the results.  Finally, in Section~\ref{sec:det} we apply these
new inequalities to the problem of the detection loophole, presenting
some numerical results and list the new inequalities which are the
most promising candidates for lowering the detection threshold for
small numbers of inputs and outputs.

\section{Preliminaries}\label{sec:prelim}
\subsection{Polytope Theory}\label{polysec}
A polytope is a convex set that can be described by the intersection
of a finite number of half-spaces\footnote{For a detailed summary of
  polytope theory, see~\cite{G1967}.}.  Given $A\in\mathbb{R}^{r\times
  t}$ and $\mathbf{c}\in\mathbb{R}^r$ we can write
\begin{equation}
\cP=\left\{\mathbf{x}\in\mathbb{R}^t\,\middle\vert\,A\mathbf{x}\geq \mathbf{c}\right\}
\end{equation}
(each $A_k\mathbf{x}\geq c_k$ describes a half-space).  This is called
an \emph{H-representation} of the polytope.  Polytopes may also be
described using a \emph{V-representation}. If a
polytope is bounded, then for some set $\{\mathbf{x}_k\}$ with
$\mathbf{x}_k\in\mathbb{R}^t$ it can be written
\begin{equation}\label{convex}
\cP=\left\{\mathbf{x}=\sum_k\lambda_k\mathbf{x}_k\,\middle\vert\,\sum_k\lambda_k=1,\;\lambda_k\geq 0\right\}.
\end{equation}
According to the Minkowski-Weyl theorem, every polytope
admits both a V-representation and H-representation.  We will always
deal with \emph{minimal representations} (in which unnecessary
half-spaces or points ${\bf x}_i$ are removed).  In a minimal
representation $\{\mathbf{x}_k\}$ are \emph{vertices}\footnote{For an
  unbounded polytope the V-representation will also have
  \emph{rays}.}.  Given a polytope $\cP\subset\mathbb{R}^t$ with
dimension $d\leq t$, the intersection of $\cP$ with a bounding
hyperplane $A_k\mathbf{x}=c_k$ is a \emph{facet} of the polytope if it
has dimension $d-1$. The minimal H-representation is exactly the set of facet-defining hyperplanes. 

Converting from V-representation to H-representation is known as
\textit{facet enumeration} whilst going from H-representation to
V-representation is known as \textit{vertex enumeration}. For a
polytope of dimension $d$, given a V-representation with $n$ vertices
(or an H-representation with $n$ half-spaces) there are algorithms that
can perform this conversion in time $O(ndr)$, with $r$ the number of
facets (vertices) enumerated~\cite{AF1992}\footnote{Some improvements
  on this have been achieved for specific classes of polytope.}. When
performing facet enumeration, $r$ is generally not known in advance,
and hence the worst case scenario is often used to provide an upper
bound. For a given dimension and number of vertices, the so called
\emph{cyclic polytope}~\cite{MS1971} has the largest possible number
of facets,
$\binom{n-\lfloor\frac{d-1}{2}\rfloor}{n-d}+\binom{n-\lfloor\frac{d-2}{2}\rfloor}{n-d}$. Using
this we obtain complexity
$O(n^{\lfloor\frac{d}{2}\rfloor})$~\cite{MS1971}. By contrast the
simplest polytope is the simplex, with only $d+1$ facets.

In this work it is convenient to represent points using matrices
rather than vectors.  In this case, the H-representation of a polytope
is based on a set $\{B_i\}$ with $B_i\in\mathbb{R}^{s\times t}$ and a
set of real numbers $c_i$ and can be expressed as
\begin{equation}
  \left\{\Pi\in\mathbb{R}^{s\times t}\,\middle\vert\,\tr(B_i^T\Pi)\geq
    c_i\,\forall\, i\right\}.
\end{equation}
Likewise the V-representation is formed via a set $\{\Pi_k\}$,
$\Pi_k\in\mathbb{R}^{s\times t}$ as
\begin{equation}
  \left\{\Pi=\sum_{k=1}^n\lambda_k\Pi_k\,\middle\vert\,\sum^n_{k=1} \lambda_k=1,\;\lambda_k\geq 0\right\}.
\end{equation}

\subsection{Linear Programming}
A linear programming problem involves the optimization of a linear
objective function over a set of variables constrained by a finite
number of linear equalities and/or inequalities~\cite{W1971}.  The
canonical form of a linear programming problem is as follows: given a
fixed $\mathbf{c}\in\mathbb{R}^n,\mathbf{q}\in\mathbb{R}^d$ and
$G\in\mathbb{R}^{d\times n}$,
\begin{equation}
\max_{\mathbf{x}}\  \mathbf{c}^T\mathbf{x}\quad\text{subject to}\quad G\mathbf{x} \leq \mathbf{q},\;\mathbf{x}\in\mathbb{R}^n,\;\mathbf{x}\geq \mathbf{0}\,.
\end{equation}
For our later considerations it is convenient to rewrite this using
$A_i,Q\in\mathbb{R}^{s\times t}$ as
\begin{align}\label{primallin}
  \max_{\{x_i\}}\ \sum_ic_ix_i\quad\text{subject to}\quad\sum_ix_iA_i\leq Q,\;x_i \geq 0\, \forall\,i\,.
\end{align}
We refer to this as the \textit{primal} form. Any linear
programming problem can be written in this way~\cite{W1971}.

A linear programming problem is said to be \emph{infeasible}, if there
is no $\{x_i\}$ satisfying the constraints. Otherwise we call the
domain of $\{x_i\}$ satisfying the constraints the \emph{feasible
  region}. If the problem admits a finite solution, the problem is
\emph{bounded}. If the domain is bounded, then it forms a convex
polytope, and the \emph{maximum principle}~\cite{R1970} states that
the optimum value is achieved at an extremal point of the polytope and
is finite.  Every linear program has an associated dual. For a problem
written in the form~\eqref{primallin}, the dual problem is
\begin{align}\label{duallin}
  \min_M\ \tr(M^TQ)\quad\text{subject to}\quad\tr(M^TA_i)\geq c_i \text{ for all }i,\;M\in\mathbb{R}^{s\times t},\;M\geq 0\,,
\end{align}
where the condition $M\geq0$ should be understood elementwise.

Linear programming problems are \emph{strongly dual}: an optimum
solution $\{x_i^*\}$ for the problem~\eqref{primallin}, and
$M^*$ for the problem~\eqref{duallin} satisfy
$\sum_ic_ix^*_i=\tr((M^*)^TQ)$. If the
primal is unbounded, then the dual is infeasible and vice
versa. Solutions $\{x_i^*\},M^*$ also satisfy the
\emph{complementary slackness} conditions~\cite{W1971}, which in our
notation are
\begin{align}
\tr\left((M^*)^T\left(Q-\sum_ix^*_iA_i\right)\right)&=0,\label{Slack1}\\
\sum_i(\tr((M^*)^TA_i)-c_i)x^*_i&=0\,.\label{Slack2}
\end{align}

Two common approaches for solving linear programming problems are
simplex algorithms~\cite{D1947} and interior point
methods~\cite{MRR2006}.  Simplex algorithms exploit the fact that the
optimum of a linear program is always achieved at a vertex. Such
algorithms move between vertices by following edges that improve the
value of the objective function until no further improvement is
possible. Interior point methods make successive steps towards the
optimal solution while remaining in the interior of the feasible
region. Since we are interested in finding extremal Bell inequalities,
simplex algorithms will be most useful for us.

\subsection{Representing Probability Distributions and Bell Inequalities.}\label{probandbell}
In this work we focus on the bipartite case, although most of the
techniques we consider generalise straightforwardly to multipartite
scenarios.  We consider two spacelike separate measurements, modelling
the inputs using random variables $X$ and $Y$, and the respective
outputs $A$ and $B$.  We label the possible values of $X$ by
$\{1,\ldots,m_A\}$.  Likewise, $Y$ takes values from
$\{1,\ldots,m_B\}$, $A$ takes values from $\{1,\ldots,n_A\}$ and $B$
takes values from $\{1,\ldots,n_B\}$.  Sometimes we will consider
cases in which different inputs have different numbers of outputs.
Taking the notation from~\cite{RBG2014}, if the measurements
$X=1,2,\ldots$ have numbers of outcomes $n_A^1,n_A^2,\ldots$, and
likewise measurements $Y=1,2,\ldots$ have numbers of outcomes
$n_B^1,n_B^2,\ldots$ we will label the scenario
$[(n_A^1\ n_A^2\ \ldots)\ (n_B^1\ n_B^2\ \ldots)]$.  If the
measurements $X=x$ and $Y=y$ are performed\footnote{We use upper case
  to denote random variables and lower case for particular values of
  these.}, the joint distribution over the outputs is $P_{AB|xy}$,
i.e., for all $x,y,a,b$ we have $0\leq P_{AB|xy}(a,b)\leq 1$, and for
all $x,y$ we have $\sum_{a,b}P_{AB|xy}(a,b)=1$.  A distribution is said
to be no-signalling if it satisfies
\begin{align*}
\sum_bP_{AB|xy}(a,b)=\sum_b P_{AB|xy'}(a,b)\quad\forall\ a,x,y,y',\quad\text{and}\quad\sum_aP_{AB|xy}(a,b)=\sum_a P_{AB|x'y}(a,b)\quad\forall\ b,y,x,x'.
\end{align*}
Since we consider measurements made at spacelike separation, all
distributions will be no-signalling.

Using notation from Tsirelson~\cite{T1993}, we express the conditional
distribution $P_{AB|XY}$ using an $m_An_A\times m_Bn_B$ matrix:
\begin{equation}\label{tableform}
\Pi=\left(\begin{array}{ccc|cccc|ccc}
P_{AB|11}(1,1) & \ldots & P_{AB|11}(1,n_B) & \ldots&\ldots&\ldots&\ldots & P_{AB|1m_B}(1,1) & \ldots & P_{AB|1m_B}(1,n_B) \\
\vdots & \ddots & \vdots & \ldots&\ldots&\ldots&\ldots & \vdots & \ddots & \vdots \\
P_{AB|11}(n_A,1) & \ldots & P_{AB|11}(n_A,n_B) & \ldots&\ldots&\ldots&\ldots & P_{AB|1m_B}(n_A,1) & \ldots & P_{AB|1m_B}(n_A,n_B)\\
\hline
\vdots & \vdots & \vdots & \ddots &&& & \vdots & \vdots &\vdots\\
\vdots & \vdots & \vdots & &&& & \vdots & \vdots &\vdots\\
\hline
P_{AB|m_A1}(11) & \ldots & P_{AB|m_A1}(1,n_B) & \ldots&\ldots&\ldots&\ldots & P_{AB|m_Am_B}(1,1) & \ldots & P_{AB|m_Am_B}(1,n_B) \\
\vdots & \ddots & \vdots & \ldots&\ldots&\ldots&\ldots & \vdots & \ddots & \vdots \\
P_{AB|m_A1}(n_A,1) & \ldots & P_{AB|m_A1}(n_A,n_B) & \ldots&\ldots&\ldots&\ldots &P_{AB|m_Am_B}(n_A,1)&\ldots &P_{AB|m_Am_B}(n_A,n_B)
\end{array}\right),
\end{equation}
where, for clarity, we have added dividing lines to the matrices to
indicate the different measurements. This notation makes it convenient
to check the no-signalling conditions.  However, it has some
redundancy: the no-signalling conditions and the normalization
condition mean that the true dimension of the space is
$(m_A(n_A-1)+1)(m_B(n_B-1)+1)-1$~\cite{T1993}.

A \emph{local deterministic} distribution is one for which
$P_{AB|XY}=P_{A|X}P_{B|Y}$ and for which $P_{A|x}(a)\in\{0,1\}$ for
all $a,x$ and $P_{B|y}(b)\in\{0,1\}$ for all $b,y$.  There are
$(n_A)^{m_A}(n_B)^{m_B}$ such distributions, and we use
$P_{AB|XY}^{\rL,i}$ to denote the $i^{\text{th}}$ local distribution for
$i=1,\ldots,(n_A)^{m_A}(n_B)^{m_B}$.  A \emph{local} distribution is then one that can be
written as a convex combination of local deterministic distributions,
i.e., $P_{AB|XY}=\sum_i\lambda_iP_{AB|XY}^{\rL,i}$, where
$\lambda_i\geq0$ and $\sum_i\lambda_i=1$.  We use
$\cL_{(m_A,m_B,n_A,n_B)}$ for the set of local distributions in each
scenario.  For all $(m_A,m_B,n_A,n_B)$, these form a polytope (the
\emph{local polytope}) with the local deterministic distributions as
its vertices.

A \emph{Bell inequality} is a linear inequality that is satisfied if
and only if a distribution is local.  The most important class of
these are the \emph{facet} Bell inequalities, which represent the
facets of the local polytope.  In principle, these can be found by
facet enumeration starting from the local deterministic
distributions.  We can represent every Bell inequality using a
$m_An_A\times m_Bn_B$ matrix, $B$, such that the Bell inequality can
be expressed as $\tr(B^T\Pi)\geq c$, where $c$ is some constant.  If
$\Pi$ is local, then it necessarily satisfies every Bell
inequality. Given a complete set of facet Bell inequalities, a
distribution is local if and only if it satisfies them all. Thus,
finding all the facet Bell inequalities is an important task.

Note that the facet Bell inequalities include $m_Am_Bn_An_B$
``trivial'' inequalities that correspond to the positivity of each
component of $\Pi$, and are necessary for $\Pi$ to represent a
probability distribution.  For each non-trivial Bell inequality there
is some no-signalling distribution that violates it.

Note that there are many ways to express the same Bell inequality.  To
explain these, we introduce a little terminology.  We say that a
matrix $M$ is \emph{no-signalling type} if for all $\Pi$ representing
a no-signalling distribution we have $\tr(M^T\Pi)=0$.  We say that a
matrix $M$ is \emph{identity type} if for all $\Pi$ representing a
valid distribution $\tr(M^T\Pi)=1$. (for example, for
$(m_A,m_B,n_A,n_B)=(2,2,2,2)$,
$\left(\begin{array}{cc|cc}1&1&-1&-1\\0&0&0&0\\\hline0&0&0&0\\0&0&0&0\end{array}\right)$
is no-signalling type and
$\left(\begin{array}{cc|cc}1&1&0&0\\1&1&0&0\\\hline0&0&0&0\\0&0&0&0\end{array}\right)$
is identity type).  If $B$ represents the Bell inequality
$\tr(B^T\Pi)\geq c$, then for reals $s$, $t$, $r$ with $r>0$, $M$
no-signalling type and $I$ identity type we have that
$\tilde{B}=rB+sM+tI$ represents the Bell inequality
$\tr(\tilde{B}^T\Pi)\geq rc+t$.  Although $\tilde{B}\neq B$, both
$\tilde{B}$ and $B$ are representations of the same Bell
inequality. For a Bell inequality of the form $\tr(B^T\Pi)\geq c$, any
local distribution, $\tilde{\Pi}$, for which $\tr(B^T\tilde{\Pi})=c$ is said to
\emph{saturate} the inequality.

Given a Bell inequality, we can construct others in the same scenario
by relabelling inputs and outputs.  In addition, if $n_A=n_B$ and
$m_A=m_B$ then we can also swap parties to construct others. Two Bell
inequalities related by such relabellings are said to be in the same
\emph{class}.  If $n_A=n_B$ and $m_A=m_B$ then there are
$2(n_A!)^{m_A}(n_B!)^{m_B}m_A!m_B!$ ways to relabel (and half as many
otherwise), although some relabellings may be equivalent to
others.\footnote{For example, when $(m_A,m_B,n_A,n_B)=(2,2,2,2)$,
  there are 32 relabellings, but applying each of these to a CHSH
  inequality generates only 8 unique inequalities.}  Because
relabellings do not change the essential features of a Bell
inequality, we focus on acquiring a representative of each Bell
inequality class, rather than the full list of inequalities.

Another important property of Bell inequalities is that they may be
``lifted'' to apply to scenarios with more inputs and/or
outputs~\cite{P1999}. We can add an input by setting the coefficients
corresponding to the new input to 0. (This means that the new Bell
inequality ignores the new input.) To increase the number of outputs,
the lifting involves copying the coefficients for one of the existing
outputs\footnote{The choice of output to copy may vary with each
  input. In addition we can also lift by only adding an output for one
  of the inputs, but not the others, e.g., to go from
  $[(3\ 4)\ (3\ 2)]$ to $[(3\ 4)\ (4\ 2)]$.}. This corresponds to
treating the new output in the same way as one of the existing
outputs.  This copying is needed to ensure that the new Bell
inequality has the same bound.  Note that these methods of lifting have
the property that the lifting of a facet Bell inequality always gives
a facet Bell inequality~\cite{P2005}.

To illustrate the concept of lifting, we present three Bell
inequalities: The first ($B_{\text{CHSH}}$) is the CHSH inequality for the
$(2,2,2,2)$ scenario. The second ($B_I$) lifts this to become a
$(2,3,2,2)$ inequality, whilst the third ($B_O$) is a lifting of $B_C$
to a $(2,2,2,3)$ inequality. All three are facet Bell inequalities
with the form $\tr(B^T\Pi)\geq 1$.
\begin{align*}
B_{\text{CHSH}}&=\left(\begin{array}{cc|cc}
1 & 0 & 1 & 0 \\
0 & 1 & 0 & 1 \\
\hline
1 & 0 & 0 & 1 \\
0 & 1 & 1 & 0
\end{array}\right)& 
B_I&=\left(\begin{array}{cc|cc|cc}
1 & 0 & 1 & 0 & 0 & 0 \\
0 & 1 & 0 & 1 & 0 & 0 \\
\hline
1 & 0 & 0 & 1 & 0 & 0 \\
0 & 1 & 1 & 0 & 0 & 0
\end{array}\right) &
B_O&=\left(\begin{array}{ccc|ccc}
1 & 0 & 0 & 1 & 0 & 0 \\
0 & 1 & 1 & 0 & 1 & 1 \\
\hline
1 & 0 & 0 & 0 & 1 & 1 \\
0 & 1 & 1 & 1 & 0 & 0
\end{array}\right)
\end{align*}

\subsection{The sets of no-signalling and quantum
  distributions}\label{allnosig}
For a given scenario, the set of all no-signalling distributions,
$\cNS_{(m_A,m_B,n_A,n_B)}$, forms a polytope whose facets correspond to
the positivity of probabilities.  The extremal no-signalling
distributions can in principle be found by vertex enumeration on these
facets.  In the general case, we do not know how to express all of
these, but local deterministic distributions are always vertices of
$\cNS_{(m_A,m_B,n_A,n_B)}$.  Furthermore, if both parties have only
two outputs per measurement, i.e., $n_A=n_B=2$, it is known that (up
to relabelling) all non-local extremal no-signalling distributions
have the form~\cite{JM2005}
\begin{equation}\label{eq:nsform}
   \begin{array}{lllllll}
     \begin{matrix}
       \qquad\qquad\ \ \   
       \overbrace{\rule{2.0cm}{0pt}}^{m_B-2-g}&\overbrace{\rule{1.6cm}{0pt}}^{g}
     \end{matrix}
     \\
    \begin{pmatrix}
       S & S & S &\ldots & S & L & \ldots & L \\
S & A & S/A & \ldots & S/A & L & \ldots & L \\
S & S/A & S/A & \ldots & S/A & L & \ldots & L \\
\vdots & \vdots & \vdots & & \vdots &\vdots & & \vdots\\
S & S/A & S/A & \ldots & S/A & L & \ldots & L \\
K & K & K & \ldots & K & M & \ldots & M \\
\vdots & \vdots & \vdots & & \vdots &\vdots & & \vdots\\
K & K & K & \ldots & K & M & \ldots & M
\end{pmatrix}
                                      \begin{array}{l}\phantom{\begin{matrix}0\\0\end{matrix}}\\\left.\phantom{\begin{matrix}0\\0\\0\end{matrix}}\right\rbrace{m_A-2-h}\\\left.\phantom{\begin{matrix}0\\0\\0\end{matrix}}\right\rbrace{h}
                                      \end{array}
   \end{array}
 \end{equation}
where $g\in\{0,1,\ldots,m_B-2\}$, $h\in\{0,1,\ldots,m_A-2\}$ and with the following $2\times 2$ blocks:
\begin{align*}
S=&\left(\begin{array}{cc}
\frac{1}{2} & 0\\
 0 & \frac{1}{2}
 \end{array}\right) & 
A= &\left(\begin{array}{cc}
 0 & \frac{1}{2}\\
 \frac{1}{2} & 0
 \end{array}\right) &
 K= &\left(\begin{array}{cc}
 \frac{1}{2} & \frac{1}{2}\\
 0 & 0
 \end{array}\right) &
L= &\left(\begin{array}{cc}
 \frac{1}{2} & 0  \\
 \frac{1}{2} & 0
 \end{array}\right) &
 M=&\left(\begin{array}{cc}
 1 & 0  \\
 0 & 0
 \end{array}\right).
\end{align*}

The set of quantum distributions is a subset of $\cNS$.  It is convex,
but not a polytope.  A distribution $P_{AB|XY}$ is quantum if there
exist POVMs $\{E_a^x\}_a$ and $\{F_b^y\}_b$ and a bipartite quantum
state $\rho_{AB}$ such that
$P_{AB|xy}(a,b)=\tr((E_a^x\ot F_b^y)\rho_{AB})$ for all $a$, $b$, $x$
and $y$.  We use $\cQ_{(m_A,m_B,n_A,n_B)}$ to denote the set of
quantum distributions in each scenario. Given a distribution
$P_{AB|XY}$ it is difficult to decide whether it is quantum.  To cope
with this a series of outer approximations to the quantum set were
introduced~\cite{NPA2008}. For positive integer $k$, we use
$\cQ^k_{(m_A,m_B,n_A,n_B)}$ to denote the set of correlations at the
$k^{\text{th}}$ level.  These levels form a hierarchy, in that
$$\cQ_{(m_A,m_B,n_A,n_B)}\subseteq\cQ^k_{(m_A,m_B,n_A,n_B)}\subseteq\cQ^l_{(m_A,m_B,n_A,n_B)}$$
for any positive integers $k>l$.  The advantage of using these sets is
that testing for membership of $\cQ^k_{(m_A,m_B,n_A,n_B)}$ is a
semidefinite program, which is in practice tractable for small enough
$k,m_A,m_B,n_A,n_B$.

Given a distribution $\Pi$, we use the following measure of its non-locality.
\begin{definition}[\cite{ZKBA1999}]\label{primal}
The \emph{local weight} of a distribution $\tilde{\Pi}$ is the solution to the problem:
\begin{align}
  \max_{\{x_i\}}\ \ &\sum_ix_i\nonumber\\
  \text{subject to }\ & \sum_ix_iP^{\rL,i}\leq \tilde{\Pi}\label{eq:prim}\\
  &x_i \geq 0\text{ for all }i\,,\nonumber
\end{align}
where $\sum_ix_iP^{\rL,i}\leq \tilde{\Pi}$ is interpreted
component-wise.\footnote{This condition ensures that
  $\tilde{\Pi}-\sum_ix_iP^{\rL,i}$ is equal to $(1-\sum_ix_i)\tilde{P}$ for
  $(1-\sum_ix_i)\geq0$ and with $\tilde{P}$ as a valid distribution.}
\end{definition}
If $\tilde{\Pi}$ is local then the local weight is $1$, while if $\tilde{\Pi}$ is a
non-local extremal no-signalling distribution, its local weight is
$0$.  Note that computing the local weight is a linear program.  Its dual
can be written
\begin{align}
  \min_M\ \ &\tr(M^T\tilde{\Pi}) \nonumber\\
  \text{subject to }\ &\tr(M^TP^{\rL,i})\geq1 \text{ for all }i\label{eq:dual}\\
          &M\geq 0\,,\nonumber
\end{align}
where again $M\geq 0$ is treated component-wise, and $i$ runs over all
local deterministic distributions.  Note that $\tr(M^TP^{\rL,i})\geq1$
for all $i$ implies that $\tr(M^T\Pi)\geq1$ is a (possibly trivial)
Bell inequality. If $\tilde{\Pi}$ is non-local, the matrix $M^*$ that
achieves the minimum is a non-trivial inequality violated by
$\tilde{\Pi}$.

\begin{definition}
  Let $\{x^*_i\}$ be the argument that achieves the optimum in the
  definition of the local weight of a distribution.  The \emph{local
    part} of a distribution $\tilde{\Pi}$ is $\sum_ix^*_iP^{\rL,i}$ and the
  \emph{non-local part} is $\tilde{\Pi}-\sum_ix^*_iP^{\rL,i}$.
\end{definition}
Note that the local and non-local parts of $\tilde{\Pi}$ are not in general
normalized, but can be easily renormalized, for instance by dividing
by the sum of all elements and multiplying by $m_Am_B$.

\section{Generating Facet Bell Inequalities}\label{sec:gen}
We can use the insight of the previous section to find Bell
inequalities by solving the dual problem~\eqref{eq:dual} for non-local
distributions.  Note that the Bell inequalities that emerge as
solutions have the form where all entries are positive and have local
bound $1$.  It turns out that all Bell inequalities can be represented
in such a form (see Lemma~\ref{posformlemma} in
Appendix~\ref{App:Proof}).  Furthermore, for every non-trivial facet
Bell inequality of this form there exists a non-local extremal
no-signalling distribution which gives value 0 for this Bell
expression (see Lemma~\ref{every} in Appendix~\ref{App:Proof}).  In
addition, there is a non-local extremal no-signalling distribution
achieving this and that takes the form~\eqref{eq:nsform} with $g=h=0$
(see Theorem~\ref{noKLM} in Appendix~\ref{App:Proof}).

This suggests that we could find all the Bell inequalities by running
the dual program for all non-local extremal no-signalling
distributions (in cases where these are known).  However, there is a
hidden subtlety: although 0 is the minimum possible value for any Bell
expression of this form (hence no other Bell inequality can have a
larger violation) there can be several Bell expressions all of which
have value 0 at the same extremal no-signalling distribution.  This
means that the output Bell inequality may not be a facet inequality,
and that some facet inequalities may be missed.  To mitigate this
problem, we can reduce the degeneracy by mixing extremal no-signalling
distributions with local distributions prior to using them as the dual
problem's objective function.  This is the idea behind our
algorithm. In principle one can choose enough local distributions to
break this degeneracy completely (see Appendix~\ref{App:Guar}); in
practice though we only mix with two local distributions to reduce the
degeneracy whilst keeping a reasonable runtime.

The idea to use the dual of the local weight problem to find Bell
inequalities was used before in~\cite{SBSL2016}, where a procedure
similar to our Algorithm~2 was used to find 18 Bell inequalities in
the $(3,3,3,3)$ scenario.

\subsection{A Linear Programming Algorithm for Bell Inequalities}
Our algorithm needs a few sub-algorithms.

\begin{enumerate}
\item\label{sub1} This is a way to decide whether a Bell inequality
  $B$ is a facet.  To do so, we find the set of local deterministic
  distributions for which $\tr(B^TP^{\rL,i})=1$ (i.e., those that
  achieve equality in the Bell inequality).  Call the values of $i$
  giving equality $a(1),a(2),\ldots, a(t)$. These all lie on a face,
  which is a facet if its dimension is one less than that of the
  entire space.  In other words we have to check how many dimensions
  are spanned by $\{P^{\rL,a(i)}-P^{\rL,a(1)}\}_{i=2}^t$.  If this is
  $(m_A(n_A-1)+1)(m_B(n_B-1)+1)-2$ then $B$ represents a facet Bell
  inequality.  (This dimensionality can be found by
  computing the rank of a matrix whose rows comprise the elements of
  $P^{\rL,a(i)}-P^{\rL,a(1)}$ for $i=2,\ldots,t$.)

\item\label{sub2} This checks whether two matrices $B$ and
  $\tilde{B}$ are representations of the same inequality.  To do so we
  compute two vectors, $v$ and $\tilde{v}$ with components
  $v_i=\tr(B^TP^{\rL,i})$ and $\tilde{v}_i=\tr(\tilde{B}^TP^{\rL,i})$.  By
  construction, the smallest element of each of these vectors is 1.
  Let the second smallest value of $v$ be $s>1$.  We perform an affine
  transformation that maps $1$ to $1$ and $s$ to $2$, (i.e., the
  function $x\mapsto\frac{1}{s-1}x+\frac{s-2}{s-1}$) to each component
  of $v$ forming $v'$.  A similar procedure is performed on
  $\tilde{v}$ forming $\tilde{v}'$.  The matrices $B$ and
  $\tilde{B}$ represent the same inequality if and only if
  $v'=\tilde{v}'$.

\item\label{sub3} Because we are interested in classes of Bell
  inequality, rather than the inequalities themselves, we also need to
  check whether two matrices are equivalent up to relabellings.  This
  algorithm checks whether $v'=T_m(\tilde{v}')$ where $m$ runs over
  all the relabellings, and $T_m$ is the permutation on the entries of
  $\tilde{v}'$ corresponding to the $m^{\text{th}}$ relabelling (a
  list of such permutations can be computed once before commencing the
  main algorithm to speed up this check, although for larger cases a
  lot of memory is required to store them all).  Note that if $v'$ and
  $\tilde{v}'$ do not have the same numbers of each type of entry (the
  same \emph{tally}) then there cannot be such a permutation.  We
  hence first check for this before running over all the permutations
  corresponding to relabellings.
\end{enumerate}

\subsubsection*{Algorithm 1}
This algorithm generates new facet inequalities for cases where
$n_A=n_B=2$.  It takes input $\eps\in(0,2/3)$, and a list $W$ of
known facet inequalities (which could be empty).
\begin{enumerate}
\item Set $j=1$.
\item\label{st:2} Set $\tilde{\Pi}$ to be the $j^{\text{th}}$ extremal
  no-signalling distribution of the form~\eqref{eq:nsform} with
  $g=h=0$.
\item\label{st1:3} Solve the dual problem~\eqref{eq:dual} for $\tilde{\Pi}$ using a simplex
  algorithm, giving the matrix $M$ that minimizes $\tr(M^T\tilde{\Pi})$.
\item Generate a list of values of $i$ such that
  $\tr(M^TP^{\rL,i})=1$.  Call these $a(1),a(2),\ldots,a(t)$.
\item Check whether $M$ defines a facet (using
  subalgorithm~\ref{sub1}) and whether it or a Bell
  inequality in the same class is in the list $W$ (using
  subalgorithms~\ref{sub2} and~\ref{sub3}).  If not, add $M$ to
  $W$.
\item\label{st:jk} Choose 2 distinct elements $k,l$ from $\{1,2,\ldots,t\}$ and
  form $\Pi'=(1-\frac{3\eps}{2})\tilde{\Pi}+\eps P^{\rL,a(k)}+\frac{\eps}{2}P^{\rL,a(l)}$.
\item\label{st1:7} Solve the dual problem~\eqref{eq:dual} for $\Pi'$ using a
  simplex algorithm, giving the matrix $M'$ that minimizes
  $\tr((M')^T\Pi')$.
\item\label{st:chk} Check whether $M'$ defines a facet  (using
  subalgorithm~\ref{sub1}) and whether it
  or a Bell inequality in the same class is in the list $W$ (using
  subalgorithms~\ref{sub2} and~\ref{sub3}).  If not,
  add $M'$ to $W$.
\item Repeat steps~\ref{st:jk}--\ref{st:chk} running over all distinct
  pairs $k,l$ from $\{1,2,\ldots,t\}$.
\item If $j< 2^{(m_A-1)(m_B-1)-1}$ set $j=j+1$ and return to
  step~\ref{st:2}, otherwise end the algorithm, outputting $W$.
\end{enumerate}

\subsubsection*{Algorithm 2}
In cases where the complete set of extremal no-signalling vertices is
not known, we use another algorithm to find facet Bell inequalities.
This works by picking random quantum-realisable distributions instead
of extremal no-signalling distributions.  This algorithm takes as
input a number of iterations, $j_{\max}$, and a list $W$ of known facet
inequalities (which could be empty).
\begin{enumerate}
\item Set $j=1$.
\item\label{st2:2} Randomly choose a pure quantum state of dimension
  $(\max(n_A,n_B))^2$ and $m_A$ random projective measurements of
  dimension $\max(n_A,n_B)$ with $n_A$ outcomes and $m_B$ random
  projective measurements of dimension $\max(n_A,n_B)$ with $n_B$
  outcomes.\footnote{For more details on how we do this, see
    Appendix~\ref{App:QDis}.}  Form the (quantum) distribution $\tilde{\Pi}$
  by computing the distribution that would be observed for this state
  and measurements.
\item\label{st2:3} Solve the dual problem~\eqref{eq:dual} for $\tilde{\Pi}$
  using a simplex algorithm, giving the matrix $M$ that minimizes
  $\tr(M^T\tilde{\Pi})$.
\item Check whether $M$ defines a facet (using
  subalgorithm~\ref{sub1}) and whether it or a Bell
  inequality in the same class is in the list $W$ (using
  subalgorithms~\ref{sub2} and~\ref{sub3}).  If not, add $M$ to
  $W$.
\item If $j< j_{\max}$ set $j=j+1$ and return to
  step~\ref{st2:2}, otherwise end the algorithm, outputting $W$.
\end{enumerate}

We also consider Algorithm~$2'$, which is the same as Algorithm~2
except that the following additional step is added between
Steps~\ref{st2:2} and~\ref{st2:3}.\bigskip

\ \ 2b. If $\tilde{\Pi}$ is non-local, replace $\tilde{\Pi}$ with the
renormalized non-local part of $\tilde{\Pi}$.

\subsection{Comments on the algorithms}\label{sub:comm}
We use the simplex algorithm as implemented by Mathematica~11.1.1.
Unfortunately, full details of this specific implementation are not
publicly available (to our knowledge).  In particular, for our
problem, in spite of the steps taken to break some of the degeneracy,
for a given $\Pi'$ there remain many $M'$ that achieve the optimum for
the dual.  Which one is given out by the simplex algorithm depends on
the details of how it decides which edge to travel along when faced
with several possibilities. Mathematica's implementation is
deterministic, but for fixed $k$ and $l$, small changes in $\eps$ can
lead to a different solution based on $\Pi'$. It is hence useful to
rerun the algorithm for several values of $\eps$.

As mentioned above, one disadvantage of the above algorithm is that in
many cases the output of the dual program does not correspond to a
facet inequality. It is possible to alter the problem such that the
solution space is the polar dual of the local polytope, where indeed
every solution will be facet. However, by doing this, it is no longer
the case that every facet inequality will be given as a solution,
regardless of the input no-signalling distribution (i.e., the analogue
of Lemma~\ref{every} does not hold). This is further discussed in
Appendix~\ref{App:Polar}.  An alternative way to find facet Bell
inequalities from lower dimensional faces was used in~\cite{BGP2010}.

Running through all permutations to check whether two Bell matrices
are in the same class can take time, so Algorithm~1 can also be run
with a modified subalgorithm~\ref{sub3} in which $M$ is added to $W$
if the vectors $v'$ and $\tilde{v}'$ have different tallies.  Running
the algorithm in this way can generate many classes of Bell
inequality, but without also running through all permutations, some
classes may be missed.  For instance, in the case
$(m_A,m_B,n_A,n_B)=(4,4,2,2)$, for $i=1,2$, the Bell inequalities
$\tr(B_i^T\Pi)\geq1$ where
\begin{equation*}
B_1=\left(
\begin{array}{cc|cc|cc|cc}
 0 & \frac{1}{2} & 0 & 0 & 0 & 0 & 0 & 0 \\
 0 & 0 & \frac{1}{2}\vphantom{\frac{1}{f}} & 0 & 0 & 0 & \frac{1}{2} & 0 \\
 \hline
 0 & 0 & 0 & 0 & 0 & \frac{1}{2} & 0 & 0 \\
 \frac{1}{2} & 0 & 0 & \frac{1}{2}\vphantom{\frac{1}{f}} & 0 & 0 & \frac{1}{2} & 0 \\
  \hline
 0 & 0 & 0 & \frac{1}{2} & 0 & 0 & \frac{1}{2} & 0 \\
 \frac{1}{2} & 0 & 0 & 0 & \frac{1}{2}\vphantom{\frac{1}{f}} & 0 & 0 & 0 \\
  \hline
 0 & \frac{1}{2} & \frac{1}{2} & 0 & \frac{1}{2} & 0 & 0 & 0 \\
 \frac{1}{2} & 0 & 0 & 0 & 0 & 0 & 0 & \frac{1}{2}\vphantom{\frac{1}{f}} \\
\end{array}
\right)\qquad\text{and}\qquad
B_2=\left(
\begin{array}{cc|cc|cc|cc}
 0 & \frac{1}{2} & 0 & \frac{1}{2} & 0 & 0 & 0 & 0 \\
 0 & 0 & 0 & 0 & 0 & 0 & \frac{1}{2}\vphantom{\frac{1}{f}} & 0 \\
  \hline
 0 & \frac{1}{2} & 0 & 0 & 0 & 0 & 0 & \frac{1}{2} \\
 0 & 0 & 0 & \frac{1}{2}\vphantom{\frac{1}{f}} & \frac{1}{2} & 0 & 0 & 0 \\
  \hline
 0 & 0 & 0 & 0 & 0 & \frac{1}{2} & 1 & 0 \\
 \frac{1}{2} & 0 & \frac{1}{2}\vphantom{\frac{1}{f}} & 0 & 0 & 0 & 0 & 0 \\
  \hline
 0 & \frac{1}{2} & \frac{1}{2} & 0 & \frac{1}{2} & 0 & 0 & 0 \\
 0 & 0 & 0 & 0 & 0 & 0 & 0 & \frac{1}{2}\vphantom{\frac{1}{f}} \\
\end{array}
\right) 
\end{equation*}
are in different classes, but the corresponding vectors $v'$ and
$\tilde{v}'$ have the same tallies.

Another disadvantage of our algorithm is that we do not have a
criteria for deciding when the list of classes found is complete. In
cases where we are sure we have found all Bell inequalities, we know
this only because the total number had already been found by other
means.

Note that because Algorithm~2 is based on quantum distributions, it is
unable to find Bell inequalities for which there is no quantum
violation~\cite{ABBAGP2010}.  Furthermore, the chosen quantum
distribution may be local, in which case the dual program will not
output a Bell inequality.  To circumvent these disadvantages, other
ways to pick non-local distributions can be used.  For instance, the
renormalized non-local part of a distribution may violate Bell
inequalities that the original distribution does not. This motivates
our Algorithm~$2'$, which may be able to find Bell inequalities
without quantum violation (see also Remark~\ref{rmk:2}).  If there
were an extension of the work of~\cite{JM2005} to cases with $n_A>2$
or $n_B>2$, we could use Algorithm~1 in these cases.  We expect that
this would be a quicker way to generate new Bell inequalities.

\section{Results Using the Algorithm}\label{sec:results}
In this section we summarise the results we have obtained using the
above algorithm.  Due to the large number of inequality classes found
we present them in separate files available at~\cite{web_link}, along
with a file explaining how to import and use them in both Mathematica
and Matlab.  We give both the version the algorithm found (the ``raw''
version) and a second version after an affine transformation analogous
to that mentioned in subalgorithm 2 has been applied (the ``affine''
version). These are presented after relabellings that make obvious
input/output liftings.

\subsection{$(3,5,2,2)$ scenario}\label{3522Sec}
The number of facet classes for this scenario (7) was given
in~\cite{DS2015}. Six of these are also facet classes of the $(3,4,2,2)$
scenario, and are given in~\cite{CG2004}. The remaining class was first found in \cite{QVB2014}. We used this case to test Algorithm~1, rederiving the result. We found a representative of the new class to be $\tr(I_{3522}^T\Pi)\geq 1$
\begin{equation}\label{eq:I3522}
I_{3522}=\left(\begin{array}{cc|cc|cc|cc|cc}
 0 & \frac{2}{3} & 0 & 0 & 0 & \frac{1}{3} & 0 & 0 & 0 & \frac{1}{3} \\
 0 & 0 & \frac{2}{3} & 0 & \frac{1}{3}\vphantom{\frac{1}{f}} & 0 & 0 & 0 & \frac{1}{3} & 0 \\
 \hline
 0 & 0 & 0 & 0 & \frac{2}{3} & 0 & \frac{2}{3} & 0 & 0 & 0 \\
 \frac{2}{3} & 0 & 0 & \frac{2}{3}\vphantom{\frac{1}{f}} & 0 & 0 & 0 & 0 & 0 & 0 \\
 \hline
 0 & 0 & 0 & \frac{2}{3} & 0 & \frac{1}{3} & 0 & 0 & \frac{1}{3} & 0 \\
 0 & 0 & 0 & 0 & \frac{1}{3} & 0 & 0 & \frac{2}{3}\vphantom{\frac{1}{f}} & 0 & \frac{1}{3} \\
\end{array}
\right)\,.
\end{equation}

\subsection{$(4,4,2,2)$ scenario}\label{4422Sec}
For this scenario, we have enumerated all $175$ inequivalent classes
of facet Bell inequality (including the trivial positivity
inequality). Whilst the number of classes was already
known~\cite{DS2015}, a complete list was not provided. A partial list
of 129 non-trivial inequalities was given in~\cite{PV2009}. Our
generation of these inequalities was performed on a standard desktop
computer and took a few days.

In order to give an idea of the symmetries of this polytope,
Table~\ref{facets4422} gives the number of members of each class.
\begin{table}
\begin{center}
\begin{tabular}{c|c|c}
Size of Class & Number of Classes & Notable cases \\
\hline
64 & 1 & Positivity \\
288 & 1 & CHSH~\cite{CHSH1969} \\
9216 & 2 & $I_{3322}$~\cite{CG2004} \\
18432 & 4 & $I_{4422}$~\cite{CG2004} \\
24576 & 1 & \\
36864 & 4 & \\
49152 & 2 & \\
73728 & 8 & \\
98304 & 2 & \\
147456 & 61 &\\
294912 & 89 & \\
\hline
36391264  & 175
\end{tabular}
\caption[Facet Bell inequality analysis for the $(4,4,2,2)$
scenario]{The size of each facet class for the $(4,4,2,2)$ local
  polytope. $294912$ is the size of the relabelling symmetry
  group.}\label{facets4422}
\end{center}
\end{table}

\subsection{$(2,3,3,2)$ scenario}\label{2332Sec}
The extremal no-signalling distributions are not known for this
scenario, but using Algorithm~$2'$ we found five classes
of Bell inequality: the positivity condition, a lifting of CHSH and
three new inequality classes $\tr((I_{2332}^1)^T\Pi)\geq 1$ and
$\tr((I_{2332}^2)^T\Pi)\geq 1$ and $\tr((I_{2332}^3)^T\Pi)\geq 1$ with
representative matrices
\begin{align*}
I^1_{2332}&=\left(
\begin{array}{cc|cc|cc}
 0 & \frac{1}{2} & 0 & \frac{1}{2} & 1 & 0 \\
 0 & \frac{1}{2} & 1 & 0 & 0 & \frac{1}{2} \\
 1 & 0 & 0 & \frac{1}{2} & 0 & \frac{1}{2}\vphantom{\frac{1}{f}} \\
 \hline
 \frac{1}{2} & 0 & \frac{1}{2} & 0 & \frac{1}{2} & 0 \\
 0 & 1 & 0 & 1 & 0 & 1 \\
 0 & 1 & 0 & 1 & 0 & 1
\end{array}
\right) & I^2_{2332}&=\left(
\begin{array}{cc|cc|cc}
 0 & \frac{1}{2} & 0 & 1 & \frac{1}{2} & 0 \\
 1 & 0 & \frac{1}{2} & 0 & \frac{1}{2} & 0 \\
 0 & \frac{1}{2} & \frac{1}{2}\vphantom{\frac{1}{f}} & 0 & 0 & 1 \\
 \hline
 0 & 1 & 1 & 0 & 1 & 0 \\
 \frac{1}{2} & 0 & 0 & \frac{1}{2} & 0 & \frac{1}{2} \\
 \frac{1}{2} & 0 & 0 & \frac{1}{2} & 0 & \frac{1}{2}
\end{array}
\right)
& I^3_{2332}&=\left(
\begin{array}{cc|cc|cc}
 1 & 1 & 0 & 0 & 0 & 0 \\
 1 & 0 & 0 & 1 & 0 & 1 \\
 1 & 0 & 1 & 0 & 1 & 0\vphantom{\frac{1}{f}} \\
 \hline
 0 & 1 & 0 & 0 & 0 & 0 \\
 0 & 0 & 0 & 1 & 1 & 0 \\
 0 & 0 & 1 & 0 & 0 & 1 
\end{array}
\right).
\end{align*}
For this scenario, we were able to perform facet enumeration of the local polytope, and verify that these five form the complete list of classes. Table~\ref{2332facets} gives the number of inequalities in each class.
\begin{table}
\begin{center}
\begin{tabular}{cc}
Bell Inequality Class & Number of Faces \\
\hline
Positivity & 36 \\
CHSH & 216 \\
$I^1_{2332}$ & 288 \\
$I^2_{2332}$ & 288 \\
$I^3_{2332}$ & 432 \\
\hline
Total & 1260
\end{tabular}
\caption{The facets of the $(2,3,3,2)$ local polytope, sorted into their inequality classes.}
\label{2332facets}
\end{center}
\end{table}

\subsection{$(3,3,2,3)$ scenario}\label{3323Sec}
In this scenario we again do not know the full list of extremal
no-signalling distributions, and we employed Algorithm $2$ for
$250000$ runs, finding 19 inequality classes. We then employed
Algorithm~$2'$ for $3451207$ runs to find 6 more classes (a total of
25). As no new class was found for over $1164000$ runs, we speculate
this is the full list of classes.  We also used Algorithm~4 (detailed
in Appendix~\ref{App:BellDim}) to find the smallest scenario for which
each of the inequalities first appears.  In Table~\ref{3332classes} we
present this analysis. Surprisingly, only one of the classes found is
not a lifting from a smaller scenario.
\begin{table}
\begin{center}
\begin{tabular}{ccccccc}
Pre-Lifting Scenario && Number of Classes Found && Number of Distinct Classes in Pre-Lifting Scenario && Notable Inequality\\
\hline
$[(2)\ (2)]$ && 1 && 1 && Positivity Condition \\
$[(2\ 2)\ (2\ 2)]$ && 1 && 1 && CHSH \\
$[(2\ 2)\ (2\ 3)]$ && $-$ && $-$ && \\
$[(2\ 2)\ (3\ 3)]$ && $-$ && $-$ && \\
$[(2\ 2)\ (2\ 2\ 2)]$ && $-$ && $-$ &&  \\
$[(2\ 2)\ (2\ 2\ 3)]$ && $-$ && $-$ &&  \\
$[(2\ 2)\ (2\ 3\ 3)]$ && $-$ && $-$ &&  \\
$[(2\ 2)\ (3\ 3\ 3)]$ && $-$ && $-$ &&  \\
$[(2\ 2\ 2)\ (2\ 3)]$ && 2 && 1 &&  \\
$[(2\ 2\ 2)\ (3\ 3)]$ && 1 && 1 &&  \\
$[(2\ 2\ 2)\ (2\ 2\ 2)]$ && 3 && 1 && $I_{3322}$  \\
$[(2\ 2\ 2)\ (2\ 2\ 3)]$ && 8 && 3 &&   \\
$[(2\ 2\ 2)\ (2\ 3\ 3)]$ && 8 && 4 &&   \\
$[(2\ 2\ 2)\ (3\ 3\ 3)]$ && 1 && 1 &&   \\
\end{tabular}
\caption{The inequalities classes found for $(3,3,2,3)$, sorted to
  show their degree of lifting. Relabelling of inputs has been
  used to combine scenarios of equivalent dimension e.g.,
  $[(2\ 2\ 2)\ (2\ 2\ 3)]$ and $[(2\ 2\ 2)\ (2\ 3\ 2)]$. We use ``$-$'' to denote
  cases for which it is known that no new classes exist. Note that
  classes that are unique before lifting may define several
  inequivalent classes in a higher dimension. This is why although
  $I_{3322}$ is the only new non-trivial inequality class in the $[(2\ 2\ 2)\ (2\ 2\ 2)]$
  scenario, we found three inequivalent classes corresponding to
  $I_{3322}$ liftings.}
\label{3332classes}
\end{center}
\end{table}

\subsection{$(4,5,2,2)$ scenario}\label{4522Sec}
There is no known total of inequivalent classes for this
scenario. Using Algorithm~1 we have established the existence of at
least $18277$ classes and have a representative of each.  We expect
the total number of classes to be significantly
larger\footnote{\label{ft:}We stopped the algorithm after a few weeks
  but new inequality classes were being found regularly.}. As for this
case the size of the relabelling symmetry group is quite large, which
slows down the check to see if a previous class member has been found,
we use the modified version of subalgorithm~\ref{sub3} discussed in
Section~\ref{sub:comm} in which only the tally of $v'$ is
checked---this reduces the runtime considerably, but means that we may
discard new classes that have the same tally to a class that
has already been found.

\subsection{$(3,3,3,3)$ scenario}\label{3333Sec}
Again for this scenario we do not know all the extremal no-signalling
distributions, so we use Algorithm~2 to find facet inequalities and
the modified version of subalgorithm~\ref{sub3} discussed in
Section~\ref{sub:comm} in which only the tally of $v'$ is checked. In
our initial run, we found $10143$ classes with
$j_{\max}=2.5$million. We then ran Algorithm~$2'$ with
$j_{\max}=0.6$million (again with the modified version of
subalgorithm~\ref{sub3}), and found a further $11018$ classes. These
figures suggest $2'$ is significantly better at finding new classes.
We did a small further check to confirm that there are inequivalent
inequalities with the same tallies generating 9
further inequalities.  We expect the total number of classes to be
significantly larger than the figure presented here (see
Footnote~\ref{ft:}).

Our list of inequalities can be augmented with the results
of~\cite{SBSL2016}, who present $18$ inequalities for this scenario,
of which $3$ are contained
in our list.  Again we know that, of these classes, some of them
will correspond to liftings of lower dimensional
inequalities. Table~\ref{33scenarios} partitions our 
$21170$ classes into the scenario for which they first appear, taking
into account cases in which different inputs have different numbers of
outputs.  This uses Algorithm~4, detailed in
Appendix~\ref{App:BellDim}.

\begin{table}
\begin{center}
\begin{tabular}{ccccccc}
Pre-Lifting Scenario && Number of Classes Found && Number of Distinct Classes in Pre-Lifting Scenario && Notable Inequality \\
\hline
$[(2)\ (2)]$ && 1 && 1 && Positivity Condition \\
$[(2\ 2)\ (2\ 2)]$ && 2 && 1 && CHSH \\
$[(2\ 2)\ (2\ 3)]$ && $-$ && $-$ && \\
$[(2\ 2)\ (3\ 3)]$ && $-$ && $-$ &&\\
$[(2\ 3)\ (2\ 3)]$ && $-$ && $-$ &&\\
$[(2\ 3)\ (3\ 3)]$ && $-$ && $-$ &&\\
$[(3\ 3)\ (3\ 3)]$ && 1 && 1 && $I_{2233}$\\
$[(2\ 2)\ (2\ 2\ 2)]$ && $-$ && $-$ && \\
$[(2\ 2)\ (2\ 2\ 3)]$ && $-$ && $-$ && \\
$[(2\ 2)\ (2\ 3\ 3)]$ && $-$ && $-$ && \\
$[(2\ 2)\ (3\ 3\ 3)]$ && $-$ && $-$ && \\
$[(2\ 3)\ (2\ 2\ 2)]$ && 7 && 1  \\
$[(2\ 3)\ (2\ 2\ 3)]$ && 0 && 0 && \\
$[(2\ 3)\ (2\ 3\ 3)]$ && 0 && 0 && \\
$[(2\ 3)\ (3\ 3\ 3)]$ && 0 && 0 && \\
$[(3\ 3)\ (2\ 2\ 2)]$ && 2 && 1 && \\
$[(3\ 3)\ (2\ 2\ 3)]$ && 5 && 3 && \\
$[(3\ 3)\ (2\ 3\ 3)]$ && 9 && 7 && \\
$[(3\ 3)\ (3\ 3\ 3)]$ && 4 && 4 && \\
$[(2\ 2\ 2)\ (2\ 2\ 2)]$ && 9 && 1 && $I_{3322}$ \\
$[(2\ 2\ 2)\ (2\ 2\ 3)]$ && 27 && 3 && \\
$[(2\ 2\ 2)\ (2\ 3\ 3)]$ && 15 && 4 && \\
$[(2\ 2\ 2)\ (3\ 3\ 3)]$ && 3 && 1 && \\
$[(2\ 2\ 3)\ (2\ 2\ 3)]$ && 154 && 40 && \\
$[(2\ 2\ 3)\ (2\ 3\ 3)]$ && 762 && 337 && \\
$[(2\ 2\ 3)\ (3\ 3\ 3)]$ && 398 && 276 && \\
$[(2\ 3\ 3)\ (2\ 3\ 3)]$ && 2532 && 1764 && \\
$[(2\ 3\ 3)\ (3\ 3\ 3)]$ && 6637 && 6060 && \\
$[(3\ 3\ 3)\ (3\ 3\ 3)]$ && 10602 && 10602 && \\
\end{tabular}
\caption{The inequalities classes found for $(3,3,3,3)$, sorted to
  show their degree of lifting. Due to the additional symmetry, Relabelling of parties/inputs has been used to group classes, e.g. $[(2\ 3)\ (2\ 2\ 2)]$ and $[(2\ 2\ 2)\ (2\ 3)]$. Again we find that unique classes in lower dimensions can be lifted to define several facet classes.}\label{33scenarios}
\end{center}
\end{table}

\section{Application to the detection loophole}\label{sec:det}
In the remainder of this article we investigate whether these
inequalities allow us to lower the efficiency required for closing the detection
loophole.

To perform a bipartite Bell experiment, entangled photons are sent to
two detectors where they are measured.  After repeating many times
using different randomly chosen measurements we can build up an
estimate of the distribution $\Pi$, from which we can see whether a
particular Bell inequality is violated or not.  One of the issues with
such an experiment is that real detectors sometimes fail to detect.
The question is then how to certify that the setup is non-local in the
presence of such no-click events.  In particular, we would like to
know the minimal detection efficiency at which we can still certify
the presence of non-locality. Given that certifying the presence of
non-locality is necessary for device-independent tasks, it is
important to be able to treat cases with imperfect detectors.

To model this, we assume that each detector has an efficiency,
representing the probability that it will click when it should.  For
simplicity, we consider the case where the efficiencies of each
detector are the same and call this parameter $\eta$.  No-click can be
treated as an additional outcome for each measurement that occurs with
probability $1-\eta$ (independently for each measurement).  Given a
probability distribution $P_{AB|XY}$, we use $P^{\eta}_{AB|XY}$ to
denote the inefficient detector version with efficiency $\eta$, which
is formed by adding the possible outcome ``$N$'' to each measurement
and taking
\begin{align*}
P^{\eta}_{AB|xy}(a,b)&=\eta^2P_{AB|xy}(a,b),\\
P^{\eta}_{AB|xy}(N,b)&=\eta(1-\eta)P_{B|y}(b),\\
P^{\eta}_{AB|xy}(a,N)&=\eta(1-\eta)P_{A|x}(a),\\
P^{\eta}_{AB|xy}(N,N)&=(1-\eta)^2,
\end{align*}
for all $a\in\{1,2,\ldots,n_A\}$, $b\in\{1,2,\ldots,n_B\}$,
$x\in\{1,2,\ldots,m_A\}$ and $y\in\{1,2,\ldots,m_B\}$.  We use
$\Pi^{\eta}$ to denote the matrix representation of
$P^{\eta}_{AB|XY}$.  (For a discussion of other ways to deal with
no-click events, see~\cite{Branciard11}.)

It is worth noting that for a given $\tilde{\Pi}\in\cNS_{(m_A,m_B,n_A,n_B)}$
we have that $\tilde{\Pi}^\eta\notin\cL_{(m_A,m_B,n_A+1,n_B+1)}$ only if
$\tilde{\Pi}\notin\cL_{(m_A,m_B,n_A,n_B)}$, and that for $\eta_1\geq \eta_2$, $\tilde{\Pi}^{\eta_2}\notin\cL_{(m_A,m_B,n_A+1,n_B+1)}$ implies
$\tilde{\Pi}^{\eta_1}\notin\cL_{(m_A,m_B,n_A+1,n_B+1)}$.

We now define the detection threshold for a given Bell inequality, and
for a given scenario.
\begin{definition}
  Given a $m_A(n_A+1)\times m_B(n_B+1)$ matrix $B$ such that
  $\tr(B^T\Pi)\geq c$ for all $\Pi\in\cL_{(m_A,m_B,n_A+1,n_B+1)}$, the \emph{detection
    threshold for $B$}, is defined by
\begin{equation}
\eta_B:=\inf\{\eta\in[0,1]:\exists\ \tilde{\Pi}\in\cQ_{(m_A,m_B,n_A,n_B)},\; \tr(B^T\tilde{\Pi}^\eta)<c\}\,.
\end{equation}
\end{definition}
This is the smallest detection efficiency such that $B$ can certify
non-locality using quantum states and measurements for all higher
efficiencies.  Note that $B$ is a Bell inequality for the
$(m_A,m_B,n_A+1,n_B+1)$ scenario.  Some Bell inequalities of this type
can be formed from those for the $(m_A,m_B,n_A,n_B)$ scenario by
lifting, as discussed in Section~\ref{probandbell}.
\begin{definition}
The \emph{detection threshold for the $(m_A,m_B,n_A,n_B)$ scenario} is
defined by
\begin{equation}
\eta_{(m_A,m_B,n_A,n_B)}:=\inf\{\eta\in[0,1]:\exists\ \tilde{\Pi}\in\cQ_{(m_A,m_B,n_A,n_B)},\; \tilde{\Pi}^\eta\notin\cL_{(m_A,m_B,n_A+1,n_B+1)}\}\,.
\end{equation}
\end{definition}

We can also define a detection threshold for a set
$\cS_{(m_A,m_B,n_A,n_B)}\subseteq\cNS_{(m_A,m_B,n_A,n_B)}$.
\begin{definition}
The \emph{detection threshold for $\cS_{(m_A,m_B,n_A,n_B)}$} is
defined by
\begin{equation}
\eta^\cS_{(m_A,m_B,n_A,n_B)}:=\inf\{\eta\in[0,1]:\exists\ \tilde{\Pi}\in\cS_{(m_A,m_B,n_A,n_B)},\; \tilde{\Pi}^\eta\notin\cL_{(m_A,m_B,n_A+1,n_B+1)}\}\,.
\end{equation}
\end{definition}

We will use the fact that if $\cQ$ is a subset of $\cS$ then the
detection threshold for $\cS$ is lower than the quantum one, i.e.,
$\cQ_{(m_A,m_B,n_A,n_B)}\subseteq\cS_{(m_A,m_B,n_A,n_B)}$ implies
$\eta^\cS_{(m_A,m_B,n_A,n_B)}\leq\eta_{(m_A,m_B,n_A,n_B)}$.

It is known that there exist states and measurements for which this
threshold tends to 0 as ${d\rightarrow\infty}$~\cite{M2002}. However,
to achieve this, the number of measurements required scales as
$O(2^d)$.  For practical reasons we are interested in cases with small
numbers of inputs and outputs.

In the $(2,2,2,2)$ scenario, taking $\tilde{\Pi}$ to be quantum correlations
that achieve Tsirelson's bound~\cite{C1980} for the CHSH inequality,
($\tr(B^T_{\text{CHSH}}\tilde{\Pi})=2-\sqrt{2}$) it has been shown that
$\tilde{\Pi}^\eta$ is non-local if and only if
$\eta>2(\sqrt{2}-1)\approx 82.8\%$~\cite{M1986}.  A lower value was
found by Eberhard~\cite{E1993}, who showed that one can use a
two-qubit state of the form $\cos\theta\ket{00}+\sin\theta\ket{11}$
and appropriate 2-outcome measurements to give rise to a distribution
$\hat{\Pi}_\theta$ such that for any $\eta>2/3$ there exists $\theta$ such
that $\hat{\Pi}^\eta_\theta$ is non-local, while for $\eta=2/3$ the
distribution $\hat{\Pi}^\eta_\theta$ is local for all $\theta$. Somewhat
counterintuitively, the state demonstrating non-locality has
$\theta\to0$ as $\eta\to2/3$.  In~\cite{VPB2010}, a $(4,4,2,2)$
inequality (which we refer to as $I_{4422}$~\cite{CG2004}) was
considered and a state and measurements on a four dimensional Hilbert
space were given demonstrating that
$\eta_{I_{4422}}\leq(\sqrt{5}-1)/2\approx 61.8\%$.  The state used has
the form
$\sqrt{(1-\epsilon^2)/3}\left(\ket{00}+\ket{11}+\ket{22}\right)+\epsilon\ket{33}$,
with the value $(\sqrt{5}-1)/2$ achieved in the limit
$\epsilon\rightarrow 0$.

In~\cite{MP2003}, the problem was abstracted away from specific Bell
inequalities, instead giving an explicit local-hidden variable
construction which can replicate any inefficient no-signalling
distribution, provided the detection efficiency is below
$(m_A+m_B-2)/(m_Am_B-1)$.  This hence corresponds to a lower bound on
$\eta_{(m_A,m_B,n_A,n_B)}$.  In the next subsection we improve on
these lower bounds in cases where $n_A=n_B=2$.

\subsection{A Fundamental Lower Bound on the Detection Threshold}\label{sec:NSLB}
In this section we show how to bound $\eta_{(m_A,m_B,n_A,n_B)}$ using
knowledge of the set of no-signalling distributions.  As discussed
above, since
$\cQ_{(m_A,m_B,n_A,n_B)}\subseteq\cNS_{(m_A,m_B,n_A,n_B)}$, a lower
bound for the detection threshold for all no-signalling distributions
will apply to the quantum case too.

In cases where we have a complete set of extremal no-signalling
distributions, we can obtain an arbitrarily good estimate
$\eta^\cNS_{(m_A,m_B,n_A,n_B)}$ using the following algorithm.

\subsubsection*{Algorithm~3}
The algorithm takes input $\delta\in(0,1)$, the tolerance we look for
in the solutions.
\begin{enumerate}
\item Set $j=1$, $\eta_c=1$ and $j_{\max}$ to be the total number of
  extremal no-signalling distributions.
\item \label{st3:2} Set $\tilde{\Pi}$ to be the $j^{\text{th}}$ non-local extremal
  no-signalling distribution.
\item Set $\eta_{\min}=0$ and $\eta_{\max}=1$.
\item\label{st3:4} Set $\eta'=(\eta_{\min}+\eta_{\max})/2$ and generate
  $\tilde{\Pi}_{\eta'}$.
\item Find the local weight of $\tilde{\Pi}_{\eta'}$ by solving the linear
  program~\eqref{eq:prim}, setting this to $w$.
\item If $w=1$, set $\eta_{\min}=\eta'$, otherwise, set
  $\eta_{\max}=\eta'$.
\item If $\eta_{\max}-\eta_{\min}>\delta$, go to step~\ref{st3:4}.
\item If $(\eta_{\min}+\eta_{\max})/2<\eta_c$, set
  $\eta_c=(\eta_{\min}+\eta_{\max})/2$.
\item If $j<j_{\max}$, set $j=j+1$ and return to step~\ref{st3:2}
  otherwise the algorithm ends, outputting $\eta_c$.
\end{enumerate}

This algorithm runs through all the non-local extremal no-signalling
distributions, and computes at what $\eta$ they become local.  Then by
taking the minimum over all such distributions we obtain the
no-signalling detection threshold for this scenario.

We have applied this algorithm to the $(m_A,m_B,2,2)$ scenario for
various $m_A$ and $m_B$.  The results are shown in
Table~\ref{fundamentalbounds}, where we have also compared with the
lower bound from~\cite{MP2003} in order to highlight the improvement
we obtain.

\begin{table}
\begin{center}
\begin{tabular}{cc|ccccc|c|c|}
&$m_A$&2&3&4&5&6&$2/(m_B+1)$\\
$m_B$&&&&&&&\\\hline
2&&2/3&2/3&2/3&2/3&2/3&2/3\\
3&&&4/7&5/9&5/9&5/9&1/2\\
4&&&&1/2&1/2&1/2&2/5\\
5&&&&&4/9& * &1/3\\
\end{tabular}
\caption[Fundamental bounds on the detection threshold]{The maximum
  detection efficiency such that $\Pi^\eta$ can be generated
  classically for any $\Pi\in\cNS_{(m_A,m_B,2,2)}$. The * in the
  $(5,6,2,2)$ case indicates that this was not evaluated due to the
  high number of non-local extremal no-signalling distributions. The
  final column is the lower bound of~\cite{MP2003}.  The main part of
  the table is populated with fractions, even though the
  algorithm outputs a decimal.  Strictly we should say that the
  value is consistent with the stated fraction to 7 decimal
  places.}\label{fundamentalbounds}
\label{tab:1}
\end{center}
\end{table}

\subsection{Bounding Detection Thresholds using the Semidefinite Hierarchy}
We can think of these no-signalling values as lower bounds on the
quantum detection thresholds.  When interpreted this way, we do not
expect these lower bounds to be tight, although, somewhat
surprisingly, they are in the case where $m_B=2$~\cite{E1993}.  In
order to give better bounds, we can use other supersets of the quantum
set, for instance, those based on the semidefinite
hierarchy~\cite{NPA2008}.  In other words, we can try to find
$\eta^{\cQ^k}_{(m_A,m_B,n_A,n_B)}$ for some level $k\in\mathbb{N}$ of the
hierarchy.

Since $\cQ^k_{(m_A,m_B,n_A,n_B)}$ is not a polytope, we cannot
directly use the method of Section~\ref{sec:NSLB}.  Instead, for each
value $\eta$ we perform a semidefinite optimisation over
$\cQ^k_{(m_A,m_B,n_A,n_B)}$, minimising $\tr(B^T\hat{\Pi}^{\eta})$ over
$\hat{\Pi}\in\cQ^k_{(m_A,m_B,n_A,n_B)}$ for a specific Bell inequality $\tr(B^T\Pi)\geq 1$
for the $(m_A,m_B,n_A+1,n_B+1)$ scenario. If the minimum is at least
1, we can conclude that there is no $\tilde{\Pi}\in\cQ_{(m_A,m_B,n_A,n_B)}$
for which $\tr(B^T\tilde{\Pi}^\eta)<1$. Performing the
computation for higher levels $k$ of the hierarchy gives successively
tighter bounds.

\subsection{Results}
All of the results in this section were obtained using the convex
optimisation interface CVX~\cite{cvx} for Matlab, with
MOSEK~\cite{mosek} as the solver. Unless stated otherwise, tests were
run at the default CVX precision. Note also that in the code, a value
of $\tr(B^T\Pi^\eta)\geq 1-\epsilon$ was considered local for
$\epsilon\approx 1.49\times 10^{-8}$. This mitigates against the
possibility of concluding a distribution is non-locality because of
the solver precision, when in fact it is not.  However, it can mean
that we sometimes incorrectly conclude a distribution is local.  Thus,
the values we obtain will be upper bounds of the threshold over the
considered set.

In the supplementary files these results are given in the following
format: first the inequality considered is given, followed by the
lifting choices of Alice, then the lifting choices of Bob, followed by
the threshold found.

\subsubsection{$(4,4,2,2)$ Scenario}
For this scenario an explicit quantum construction is given
in~\cite{VPB2010} achieving $(\sqrt{5}-1)/2\approx 0.6180$ ($61.80\%$) requiring
shared entangled states with local dimension 4. We were able to test
every inequality in this scenario at level $2$, with no inequality
beating $I_{4422}$, which was the inequality used
in~\cite{VPB2010}. We were only able to obtain the value $61.83\%$ for
the $I_{4422}$ construction, implying we cannot rely on the results
beyond 3~s.f., and an alternative lifting of the same inequality was
able to achieve $61.8\%$ also. Repeating the analysis for these two
with the CVX precision variable set to ``high", gave a value of
$61.82\%$ for both liftings, which also means we cannot improve the number of
significant figures this way.

\subsubsection{$(3,5,2,2)$ Scenario}\label{sec:3522}
For the single new inequality $(3,5,2,2)$, we tested it with the CVX
precision set to high. Note that we are considering $(3,5,3,3)$
probability distributions, so lifting the inequality by adding an
extra output for each input can be done in $2^{3+5}=256$ ways. For the
first level, $\cQ^1$, we obtained an optimal threshold of $64.0\%$,
and for $\cQ^2,\cQ^3$ we obtained a threshold of $66.7\%$.  This,
together with the fact that restricting to $m_A\in\{1,3\}$ and
$m_B\in\{3,5\}$ (with reference to the form of $I_{3522}$
in~\eqref{eq:I3522}), reduces to a CHSH inequality for which the
threshold is known to be $2/3$, suggests that $\eta_{I_{3522}}=2/3$. 

The supplementary file for this case omits the inequality as only one was considered.

\subsubsection{$(3,3,3,3)$ Scenario}
For each new inequality there are $3^{3+3}=729$ possible liftings. Since it would be time 
consuming to test the detection threshold for all liftings of every inequality, we selected an ``intuitively promising" subset of liftings to test for each inequality. For each lifting we summed all coefficients corresponding to detection failure outcomes, and then tested the ten with the lowest sum for each inequality. This is to minimise the impact of the failure sub-distributions, which are entirely local. If more than ten had an equivalent sum, all were tested.\\

To cut down on computational time further, each chosen lifting was
tested with $\eta=0.65$ and discarded if no non-local distributions
$\tilde{\Pi}^{0.65}$ could be found for $\tilde{\Pi}\in\cQ^2$. 338
inequality/lifting combinations achieved a threshold below this value;
65 of them obtained value $61.8\%$ --- due to the precision of 3 s.f.\
we are unable to compare them definitively with $I_{4422}$. Testing
these 65 at level $3$ of the hierarchy we find 21 of them maintain the
value $61.8\%$. These inequalities can be found in the supplementary
files provided at~\cite{web_link}.  All of these inequalities match
the threshold $(\sqrt{5}-1)/2$ up to 3~s.f., but we have not given an
explicit quantum construction achieving this value. If it turns out
that the thresholds for $I_{4422}$ and any of these inequalities are
the same, this cannot be explained by a set of correlations common to
both, unlike in the case of $I_{3522}$ and CHSH, whose thresholds
match due to $I_{3522}$ having a CHSH submatrix (cf.\
Section~\ref{sec:3522}).

\subsection*{Added note}
These results formed part of TC's Ph.D.\ thesis~\cite{CopeThesis}
submitted to the University of York 
in September 2018.  Since then the work~\cite{OBSS2018} appeared which
independently found all the Bell inequalities in the $(4,4,2,2)$
scenario.  Ref.~\cite{CG2018} took these inequalities and performed
some analysis of them.

\subsection*{Acknowledgements}
We are grateful to Carl Miller, Yaoyun Shi, Kim Winick and Tam\'{a}s
V\'{e}rtesi for useful discussions and to Yeong-Cherng Liang for
pointers to the literature. TC and RC are supported by EPSRC's Quantum
Communications Hub (grant number EP/M013472/1). TC's Ph.D.\ was
supported by a White Rose Studentship. RC is also supported by an
EPSRC First Grant (grant number EP/P016588/1).

\bibliographystyle{naturemag}

\appendix

\section{Main mathematical results}\label{App:Proof}
\begin{lemma}\label{posformlemma}
  Let $B$ be a matrix such that $\tr(B^T\Pi)\geq c$ is a Bell
  inequality.  There is a matrix $\tilde{B}$ whose entries are all
  non-negative such that $\tr(\tilde{B}^T\Pi)\geq1$ represents the
  same Bell inequality.
\end{lemma}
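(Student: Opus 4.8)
The plan is to use the freedom, described just above the lemma statement, to replace the matrix $B$ by $\tilde{B}=rB+sM+tI$ for $r>0$, $M$ no-signalling type and $I$ identity type, without changing which Bell inequality is represented. In fact no no-signalling correction will be needed ($s=0$), and the only identity-type matrix I will use is the normalised all-ones matrix. Concretely, let $J\in\mathbb{R}^{m_An_A\times m_Bn_B}$ be the matrix with every entry equal to $1$. Then $\tr(J^T\Pi)$ equals the sum of all entries of $\Pi$, and since each of the $m_Am_B$ input blocks of $\Pi$ sums to $1$ this sum is $m_Am_B$ for every valid distribution; hence $\frac{1}{m_Am_B}J$ is identity type.

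Given the Bell inequality $\tr(B^T\Pi)\geq c$, I would then consider, for a parameter $r>0$ to be fixed below, the matrix $\tilde{B}=rB+(1-rc)\,\frac{1}{m_Am_B}J$. By the transformation rule recalled in the text (with $t=1-rc$ and identity-type matrix $\frac{1}{m_Am_B}J$), $\tilde{B}$ represents the same Bell inequality, now with bound $rc+(1-rc)=1$. So the whole problem reduces to choosing $r>0$ for which every entry of $\tilde{B}$ is non-negative.

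The entries of $\tilde{B}$ are $rB_{jk}+\frac{1-rc}{m_Am_B}$, so the smallest one is $r\mu+\frac{1-rc}{m_Am_B}$, where $\mu=\min_{j,k}B_{jk}$. Requiring this quantity to be non-negative rearranges to the single scalar condition $r\,(c-\mu\,m_Am_B)\leq 1$. If $c-\mu\,m_Am_B\leq 0$ this holds for every $r>0$; otherwise it holds, for instance, for $r=(c-\mu\,m_Am_B)^{-1}>0$. In either case a valid $r>0$ exists, and the resulting $\tilde{B}$ has all non-negative entries and bound $1$, proving the lemma.

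I do not expect a real obstacle here: the only step needing a moment's care is checking that the constant $1-rc$ added uniformly to all entries is large enough to compensate the most negative entry of $rB$ while still leaving the bound equal to $1$, which is exactly the inequality $r\,(c-\mu\,m_Am_B)\leq 1$ treated above and is satisfied by all sufficiently small $r>0$. A minor point of presentation is that one may, if desired, take $c$ to be the tight local bound (so that some local deterministic $P^{\rL,i}$ saturates $\tr(B^T\Pi)\ge c$), in which case $c\geq \mu\,m_Am_B$ automatically; but, as the argument shows, this is not actually required.
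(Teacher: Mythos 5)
Your proposal is correct and is essentially the paper's proof: both add a multiple of the all-ones (identity-type) matrix and rescale by a positive factor to simultaneously normalise the bound to $1$ and lift the most negative entry to zero. Indeed, in the nontrivial case your choice $r=(c-\mu\,m_Am_B)^{-1}$ yields exactly the matrix $\tilde{B}=(B-\mu\mathbf{1})/(c-\mu\,m_Am_B)$ that the paper obtains via its two-step construction.
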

\begin{proof}
  Let $\mathbf{1}$ be the matrix with every entry $1$. The matrix
  $B'=B+(1-c)\mathbf{1}/(m_Am_B)$ is such that
  $\tr((B')^T\Pi)\geq1$ is a Bell inequality.  If $B'$ has no negative
  entries we are done.  Otherwise, suppose the minimum entry of $B'$
  is $-\alpha$.  If we choose
  $\tilde{B}=(B'+\alpha\mathbf{1})/(1+\alpha m_Am_B)$, so that, by
  construction, it has no negative entries, then for any
  local $\Pi$ we have
  \begin{align*}
    \tr(\tilde{B}^T\Pi)=\frac{1}{1+\alpha
    m_Am_B}\tr((B')^T\Pi)+\frac{\alpha}{1+\alpha
    m_Am_B}\tr(\mathbf{1}\Pi)\geq\frac{1}{1+\alpha
    m_Am_B}+\frac{\alpha m_Am_B}{1+\alpha
    m_Am_B}=1\,.
  \end{align*}
\end{proof}
\begin{corollary}
  Let $B$ be a matrix such that $\tr(B^T\Pi)\leq c$ is a Bell
  inequality.  There is a matrix $\tilde{B}$ whose entries are all
  non-negative such that $\tr(\tilde{B}^T\Pi)\geq1$ represents the
  same Bell inequality.
\end{corollary}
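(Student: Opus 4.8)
The plan is to reduce the statement directly to Lemma~\ref{posformlemma} by reversing the sense of the inequality. A constraint of the form $\tr(B^T\Pi)\leq c$ holds for every local $\Pi$ if and only if $\tr((-B)^T\Pi)\geq -c$ holds for every local $\Pi$, and these two linear conditions cut out exactly the same half-space in the space of distribution matrices; hence they are two descriptions of the same Bell inequality. So I would begin by setting $B_- := -B$ and $c_- := -c$, and observe that $\tr(B_-^T\Pi)\geq c_-$ is a Bell inequality written in the ``$\geq$'' form to which Lemma~\ref{posformlemma} applies verbatim.

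The second and final step is to invoke Lemma~\ref{posformlemma} with $B_-$ playing the role of $B$ and $c_-$ playing the role of $c$. This produces a matrix $\tilde B$ with no negative entries such that $\tr(\tilde B^T\Pi)\geq 1$ represents the same Bell inequality as $\tr(B_-^T\Pi)\geq c_-$, which is the same Bell inequality as $\tr(B^T\Pi)\leq c$. This $\tilde B$ is the desired matrix, so the argument is complete.

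I do not expect a genuine obstacle here; the only point that deserves an explicit word is the sign bookkeeping. The passage from a ``$\leq$'' inequality to a ``$\geq$'' inequality multiplies the coefficient matrix by $-1$, an orientation-reversing rescaling that is not among the $r>0$ rescalings listed in Section~\ref{probandbell}. One should therefore state plainly that a ``$\leq$'' inequality with matrix $B$ and bound $c$ and the ``$\geq$'' inequality with matrix $-B$ and bound $-c$ define the same set of distributions, and so are ``the same Bell inequality'' in the geometric sense used throughout. Once this is noted, the corollary is an immediate instance of Lemma~\ref{posformlemma}, and in particular the shift-and-rescale construction need not be repeated, as it is already performed inside that lemma's proof.
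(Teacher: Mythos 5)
Your proposal is correct and follows essentially the same route as the paper: negate $B$ to convert the ``$\leq c$'' inequality into a ``$\geq -c$'' inequality and then invoke Lemma~\ref{posformlemma}. Your sign bookkeeping is in fact slightly more careful than the paper's one-line proof, which writes the negated bound as $c$ rather than $-c$.
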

\begin{proof}
  The original Bell inequality is equivalent to $\tr((-B)^T\Pi)\geq c$
  from which we can apply Lemma~\ref{posformlemma}.
\end{proof}
  
\begin{lemma}\label{every}
  Let $B$ be a matrix with no negative entries and such that
  $\tr(B^T\Pi)\geq 1$ is a facet Bell inequality. There exists an
  extremal no-signalling distribution $\Pi^{NS}$ such that
  $\tr(B^T\Pi^{NS})=0$.
\end{lemma}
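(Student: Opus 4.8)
The plan is to obtain $\Pi^{NS}$ as a minimizer of the linear functional $\Pi\mapsto\tr(B^T\Pi)$ over the no-signalling polytope $\cNS_{(m_A,m_B,n_A,n_B)}$. That polytope has the extremal no-signalling distributions as its vertices, and a linear functional attains its minimum over a polytope at a vertex, so it suffices to prove
\[
v:=\min_{\Pi\in\cNS}\tr(B^T\Pi)=0 .
\]
The bound $v\ge 0$ is immediate since $B$ and every $\Pi\in\cNS$ are entrywise non-negative; and $v=0$ is precisely the assertion that some no-signalling distribution has support disjoint from the support of $B$, so this reformulation is forced and all the content lies in establishing $v\le 0$.

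The first step is to exclude $v\ge 1$. If $\tr(B^T\Pi)\ge 1$ held on all of $\cNS$ (not merely on the local distributions), then $\{\Pi\in\cNS:\tr(B^T\Pi)=1\}$ would be a face of $\cNS$ containing the facet $F:=\{\Pi\in\cL:\tr(B^T\Pi)=1\}$ of the local polytope; since $F$ has dimension $(m_A(n_A-1)+1)(m_B(n_B-1)+1)-2$, i.e.\ one less than that of $\cNS$, this face would itself be a facet of $\cNS$, hence a positivity condition, so $\tr(B^T\Pi)\ge 1$ would be a trivial inequality --- contrary to hypothesis (as in the surrounding discussion, only the non-trivial facets are at issue). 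Hence $0\le v<1$.

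The core of the proof is to strengthen ``$v<1$'' to ``$v=0$''; this is the step that must really use that $F$ is a facet of $\cL$ rather than merely a non-trivial face. Suppose $0<v<1$. Let $C:=\mathbf 1/(m_Am_B)$, with $\mathbf 1$ the all-ones matrix, so that $C$ is identity type; invoking strong duality for the linear program defining $v$ (the primal/dual pair of Definition~\ref{primal} and~\eqref{eq:dual}) yields an entrywise non-negative matrix $M$ with $\tr(M^T\Pi)=\tr(B^T\Pi)-v$ on every valid distribution, so that $M/(1-v)$ is a non-negative, bound-$1$ representation of the \emph{same} facet $F$ with $\min_{\cNS}\tr(M^T\Pi)=0$. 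To pass from $M$ back to the given $B$ I would argue that the representation of $F$ one works with may be taken minimal from the outset --- e.g.\ a vertex of the dual feasible polyhedron $\cD:=\{M\ge 0:\tr(M^TP^{\rL,i})\ge 1\ \forall i\}$, which is exactly what the dual program of Definition~\ref{primal} returns --- and then invoke the complementary slackness condition~\eqref{Slack1} at a non-local extremal no-signalling distribution: such a distribution has local weight $0$, so its optimal primal solution vanishes and~\eqref{Slack1} collapses to $\tr(B^T\Pi^{NS})=0$. I expect this last point --- singling out the representation of $F$ for which the $\cNS$-minimum is exactly $0$ and not merely $<1$ --- to be the real obstacle; the polytope bookkeeping surrounding it is routine.
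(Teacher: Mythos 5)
Your argument is essentially sound but takes a genuinely different route from the paper's. You work directly with $v=\min_{\Pi\in\cNS}\tr(B^T\Pi)$: you rule out $v=1$ by noting that otherwise the saturating face of $\cNS$ would contain the codimension-one facet $F$ of $\cL$, hence be a positivity facet of $\cNS$, forcing $B$ to be trivial; you then use duality to manufacture a non-negative representation of $F$ whose $\cNS$-minimum is exactly $0$, attained at a vertex. The paper instead picks a no-signalling $\tilde{\Pi}$ that violates \emph{only} the facet in question, runs the local-weight dual~\eqref{eq:dual} so that the optimizer $M^*$ is forced to represent that facet, applies complementary slackness~\eqref{Slack1} to conclude that the non-local part of $\tilde{\Pi}$ gives value $0$, and decomposes that non-local part into extremal non-local no-signalling points. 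Your dimension count is a clean substitute for the paper's appeal to violatability. One correction: the strong duality you need is that of the program $\min_{\Pi\in\cNS}\tr(B^T\Pi)$ itself (positivity constraints plus the normalization and no-signalling equalities), not the local-weight pair of Definition~\ref{primal} and~\eqref{eq:dual}; the multipliers of the equality constraints supply exactly the identity-type and no-signalling-type shift, dual feasibility gives $M\geq0$, and the identity $\tr(M^T\Pi)=\tr(B^T\Pi)-v$ then holds on $\cNS$ (not on arbitrary, possibly signalling, distributions).

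The obstacle you flag in your last paragraph is real, and you should not expect to close it: the conclusion is representation-dependent and, for a general non-negative bound-$1$ representation, false. Taking $I=\mathbf{1}/(m_Am_B)$ (identity type, non-negative), the matrix $(B+I)/2$ satisfies every hypothesis of the lemma and represents the same facet, yet its minimum over $\cNS$ is $(v+1)/2\geq1/2$, so no extremal no-signalling point gives it value $0$. The paper's own proof carries the same caveat: complementary slackness controls the dual optimizer $M^*$, which represents the same facet as $B$ but need not equal $B$ (they can differ by identity-type as well as no-signalling-type terms, and only the latter are invisible to the non-local part of $\tilde{\Pi}$). What both arguments actually establish --- and all the algorithm requires --- is that every non-trivial facet admits \emph{some} non-negative bound-$1$ representation vanishing at an extremal no-signalling distribution. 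Your proof delivers exactly this up to and including the construction of $M/(1-v)$; the closing attempt to recover the statement for the originally given $B$ via vertices of $\cD$ and~\eqref{Slack1} at an arbitrary non-local extremal point does not go through, and is not needed.
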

\begin{proof}
  Since $B$ represents a violatable Bell inequality, there exists a
  no-signalling point $\tilde{\Pi}$ such that $\tr(B^T\tilde{\Pi})<1$ but
  $\tr(\hat{B}^T\tilde{\Pi})\geq 1$ for all other matrices $\hat{B}$ that
  represent facet Bell inequalities with local bound 1.  Thus, if we
  run the dual program~\eqref{eq:dual} for $\tilde{\Pi}$, the optimum is
  obtained when $M^*$ corresponds to the same Bell inequality as
  $B$ (it may be that $M^*\neq B$, but it must represent the
    same Bell inequality).
  Using the complementary slackness condition~\eqref{Slack1}, we have
  \begin{align}
    \tr\left(B^T\left(\tilde{\Pi}-\sum_i x^*_iP^{\rL,i}\right)\right)=0\,,\label{cs1}
  \end{align}
  where $\{x^*_i\}$ achieve the optimum in the primal
  problem~\eqref{eq:prim}.  This implies that the non-local part,
  $\tilde{\Pi}-\sum_i x^*_iP^{\rL,i}$, of $\tilde{\Pi}$ gives value $0$ for the Bell
  inequality $B$.  This non-local part is a convex combination of
  extremal non-local no-signalling distributions, $\{\Pi_j^{NS}\}$, each
  satisfying $\tr(B^T\Pi_j^{NS})=0$, as required.
\end{proof}

\begin{remark}\label{rmk:1}
  For the problem in the previous lemma, the second complementary
  slackness condition~\eqref{Slack2} gives that
  \begin{align}
    \sum_i\left(\tr\left(B^TP^{\rL,i}\right)-1\right)x^*_i=0.\label{cs2}
  \end{align}
  Hence, for all $i$ either $x^*_i=0$ or
  $\left(\tr\left(B^TP^{\rL,i}\right)-1\right)=0$ (both values are
  non-negative). As $x_i^*$ is non-zero if and only if the local
  distribution $P^{\rL,i}$ is in the local part of $\tilde{\Pi}$, we can
  conclude that $\tr\left(B^TP^{\rL,i}\right)=1$ for these local
  distributions, and so the local part of $\tilde{\Pi}$ satisfies the Bell
  inequality with equality.
\end{remark}
\begin{remark}\label{rmk:2}
  Remark~\ref{rmk:1} provides an insight into why it may be useful to
  remove the local part of a quantum distribution before using as the
  dual objective function. If one uses the original distribution, any
  deterministic distributions in the local part must be saturating
  points of the Bell inequality obtained by the dual. Removing the
  local part removes this requirement, which may allow the output Bell
  inequality to be such that these local points are not saturating. We
  are not aware of any bipartite Bell inequality for which there is no
  quantum violation (although these exist for three
  parties~\cite{ABBAGP2010}).  If such non-violatable inequalities exists,
  it is possible that they may be found by the dual program whose
  input is a quantum distribution with its local part removed.
\end{remark}

\begin{theorem}\label{noKLM}
  Consider the $(m_A,m_B,2,2)$ scenario and let $B$ be a matrix with
  no negative entries and such that $\tr(B^T\Pi)\geq 1$ is a facet
  Bell inequality. There exists an extremal no-signalling distribution
  $\hat{\Pi}$ that, up to relabellings, takes the form of~\eqref{eq:nsform}
  with $g=h=0$ such that $\tr(B^T\hat{\Pi})=0$.
\end{theorem}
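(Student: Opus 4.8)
The plan is to start from the no-signalling distribution furnished by Lemma~\ref{every} and to push the parameters $g,h$ down to $0$ by a support-preserving surgery on the deterministic inputs. Concretely: since $B$ has no negative entries and $\tr(B^T\Pi)\ge 1$ is a (non-trivial) facet Bell inequality with local bound $1$, Lemma~\ref{every} gives an extremal no-signalling distribution $\Pi^{NS}$ with $\tr(B^T\Pi^{NS})=0$. Every local distribution gives value at least $1$, so $\Pi^{NS}$ is non-local, and hence by the characterisation of \cite{JM2005} it equals, up to relabellings, a distribution of the form \eqref{eq:nsform} for some $g\in\{0,\dots,m_B-2\}$ and $h\in\{0,\dots,m_A-2\}$. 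Because $B\ge 0$ entrywise, $\tr(B^T\Pi^{NS})=0$ is equivalent to $B$ vanishing on the support of $\Pi^{NS}$, i.e. $\operatorname{supp}(\Pi^{NS})\subseteq Z:=\{(i,j):B_{ij}=0\}$. So it suffices to produce a distribution of the form \eqref{eq:nsform} with $g=h=0$ whose support lies in $Z$; up to relabellings these are exactly the boxes built from a sign matrix $M\in\{\pm1\}^{m_A\times m_B}$ that is not of rank-one sign type, the $2\times2$ block of input pair $(x,y)$ being $S$ when $M_{xy}=+1$ and $A$ when $M_{xy}=-1$.

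Next I would induct on $g+h$. If $g=h=0$ we are done; otherwise, say $g\ge 1$ (the case $h\ge 1$ is symmetric). The box $\Pi^{NS}$ is, up to relabelling, a core sign matrix $M'$ on the first $m_A-h$ Alice and $m_B-g$ Bob inputs together with the deterministic outcome ``$1$'' on the remaining $g$ Bob and $h$ Alice inputs (the $L,K,M$ blocks); here $M'$ must be genuinely two-dimensional, since otherwise $\Pi^{NS}$ would be local. Pick one appended Bob input $y^\ast$: on it Bob outputs $1$, so in every block $(x,y^\ast)$ only the left column is occupied and hence lies in $Z$. Replace Bob's deterministic behaviour on $y^\ast$ by correlations with the core Alice inputs, choosing for each such $x$ whether block $(x,y^\ast)$ becomes $S$ (support $\{(x,y^\ast,1,1),(x,y^\ast,2,2)\}$) or $A$ (support $\{(x,y^\ast,1,2),(x,y^\ast,2,1)\}$); no-signalling then forces the appended-Alice rows in column $y^\ast$ to become $K$ blocks (support $\{(x,y^\ast,1,1),(x,y^\ast,1,2)\}$). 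If one selects $S$ whenever the $(2,2)$-entry of that block of $B$ vanishes, $A$ whenever the $(1,2)$-entry vanishes (taking, when both are possible, a choice compatible with the normal form of \eqref{eq:nsform} after a permutation of the Alice inputs), and provided the $(1,2)$-entries in the appended-Alice rows of column $y^\ast$ vanish, the support stays inside $Z$, the new column drops $g$ by one, and extremality and non-locality persist because an extension of a genuinely two-dimensional sign matrix is again genuinely two-dimensional. Iterating and relabelling back to the normal form produces the desired $\hat\Pi$.

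The one step that is not bookkeeping — and the reason the facet hypothesis is essential — is showing that a valid set of choices exists for some appended input: that no block $(x,y^\ast)$ of $B$ has both entries of its right-hand column strictly positive, and that the relevant appended-Alice entries vanish, for at least one appended input (after possibly relabelling its outcome). I would prove this by contradiction: if $B$ were supported on those blocks in a way that blocked every choice for every appended input and every relabelling of it, then the local deterministic distributions saturating $\tr(B^T\Pi)=1$ would be confined to a proper affine subspace of the facet hyperplane, contradicting that $B$ defines a facet. (This also uses that $B$ is non-trivial: written with local bound $1$, the trivial positivity facets admit no no-signalling distribution at value $0$, so the statement tacitly excludes them.) Making this dimension count rigorous — tracking precisely which positivity facets the candidate $g=h=0$ boxes saturate and matching them against the facet-defining set of $B$ — is the technical heart of the argument; the surgery and the support bookkeeping around it are routine.
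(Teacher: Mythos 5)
Your overall strategy is the paper's: invoke Lemma~\ref{every} to get an extremal no-signalling point annihilating $B$, observe that $B\ge 0$ forces $B$ to vanish on its support, and then iteratively lower $g$ and $h$ by replacing the deterministic ($K$/$L$/$M$) blocks with $S$/$A$ blocks whose support still avoids $B$, with the facet hypothesis ruling out the obstructed case via an affine-dimension count. So the skeleton is sound and matches the paper, including the Case-3 complication you flag (when reducing an appended Bob input one must simultaneously fix up the appended-Alice rows of that column).

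However, the step you yourself isolate as ``the technical heart'' is genuinely missing, and your proposed route to it points in the wrong direction. You want to show that for some appended input the zero-pattern of $B$ admits a valid $S/A$ choice, and you propose to derive this by arguing that a fully ``blocked'' zero-pattern would confine the saturating deterministic points to a proper affine subspace. But there is no direct implication from the zero-pattern of $B$ to confinement of the saturating set; the actual mechanism runs through the saturating distributions first. The paper's argument is: if \emph{some} saturating local deterministic $P^1$ uses the non-fixed output on an appended input $x'$, form $P^2$ by flipping that output to the fixed one; since $B$ vanishes on the entire support of the appended rows and $B\ge 0$, one gets $1\le\tr(B^TP^2)\le\tr(B^TP^1)=1$, and the forced equality $\tr(B^TP^2)=\tr(B^TP^1)$ kills the remaining entries of $B$ in those blocks, which is exactly the zero-pattern needed for your surgery. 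The dimension count is only needed in the complementary case, where \emph{every} saturating deterministic point uses the fixed output on every remaining appended input -- then the saturating set lies in the span of points supported on the first $2(m_A-h)$ rows (and $2(m_B-g)$ columns), whose dimension $((m_A-h)(n_A-1)+1)((m_B-g)(n_B-1)+1)-2$ is too small for a facet. Your proposal as written asserts the contradiction without this squeeze lemma, so the dichotomy on which the induction rests is not established; with the squeeze argument supplied, your proof closes and coincides with the paper's.
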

\begin{proof}
  By Lemma~\ref{every}, there exists an extremal no-signalling
  distribution $\tilde{\Pi}$ such that $\tr(B^T\tilde{\Pi})=0$.  Suppose that $\tilde{\Pi}$
  has the form~\eqref{eq:nsform} with either $g\neq0$, $h\neq0$, or
  both.  Our aim is to show that in these cases there is always
  another extremal no-signalling distribution $\hat{\Pi}$ of the
  form~\eqref{eq:nsform} with $g=h=0$ such that $\tr(B^T\hat{\Pi})=0$.\medskip

  \noindent{\bf Case 1:} Suppose $g=0$, $h\neq 0$.  Since the
  coefficients of $B$ are non-negative, $B$ must have zero entries
  whenever $\tilde{\Pi}$ has non-zero entries.  Hence, $B(0b|xy)=0$ for
  $b\in\{0,1\}$, $x\in\{m_A-h+1,\ldots,m_A\}$ and
  $y\in\{1,\ldots,m_B\}$.
  
  Suppose there exists a local deterministic distribution, $P^1$, such
  that $\tr(B^TP^1)=1$ and for which $P^1_{A|x'}(1)=1$ for some
  $x'\in\{m_A-h+1,\ldots,m_A\}$.  Consider now another local
  deterministic distribution $P^2$ that is identical to $P^1$ except
  that $P^2_{A|x'}(0)=1$ (i.e., $P^2$ is formed by exchanging the row
  of $P^1$ corresponding to $P^1_{A|x'}(0)$ with the row corresponding
  to $P^1_{A|x'}(1)$).  It follows that
  $1\leq\tr(B^TP_2)\leq\tr(B^TP_1)=1$, i.e., $P^2$ saturates the Bell
  inequality if $P^1$ does. It also follows that the $2\times2$ blocks
  of $B$ corresponding to a measurement of $X=x'$ and any
  $Y\in\{1,\ldots,m_B\}$ have the form
  $\left(\begin{array}{cc}0&0\\0&\gamma\end{array}\right)$ or
  $\left(\begin{array}{cc}0&0\\\gamma&0\end{array}\right)$ (depending
  on whether $P^1_{B|y}(0)=1$ or $P^1_{B|y}(1)=1$), where $\gamma$ is
  an arbitrary non-negative value.

  We can therefore replace the $2\times2$ blocks of $\tilde{\Pi}$
  corresponding to $X=x'$, $Y\in\{1,\ldots,m_B\}$ by $A$ or $S$
  without affecting the value of $\tr(B^T\tilde{\Pi})$.  In other words, if
  $\tr(B^TP^1)=1$ and $P^1_{A|x'}(1)=1$ for some
  $x\in\{m_A-h+1,\ldots,m_A\}$, there exists another extremal
  no-signalling distribution, $\Pi'$, with a smaller value of $h$ that
  also has $\tr(B^T\Pi')=0$.

  If we can reduce in this way until $h=0$, we are done.
  Alternatively, we reduce to the case where $\tr(B^TP)=1$
  implies $P_{A|x}(0)=1$ for all $x\in\{m_A-h+1,\ldots,m_A\}$.  The
  affine span of the local deterministic distributions satisfying
  $\tr(B^TP)=1$ is hence the same as the affine span of those
  satisfying $\tr(\tilde{B}^T\tilde{P})=1$, where $\tilde{B}$ and
  $\tilde{P}$ comprise the first $2(m_A-h)$ rows of $B$ and $P$
  respectively.  This is at most $((m_A-h)(n_A-1)+1)(m_B(n_B-1)+1)-2$,
  and hence contradicts the assumption that $B$ is a facet
  inequality.\medskip

  \noindent{\bf Case 2:} If $g\neq0$, $h=0$, we can run an analogous
  argument.\medskip

  \noindent{\bf Case 3:} Suppose that both $g\neq0$ and $h\neq0$, and
  consider some $x'\in\{m_A-h+1,\ldots,m_A\}$ and
  $y'\in\{m_B-g+1,\ldots,m_B\}$. We must have $B(0b|x'y)=0$ for
  $b\in\{0,1\}$ and $y\in\{1,\ldots,m_B-g\}$, $B(a0|xy')=0$ for
  $a\in\{0,1\}$ and $x\in\{1,\ldots,m_A-h\}$, and $B(00|x'y')=0$.
  
  Suppose there exists a local deterministic distribution, $P^1$, such
  that $\tr(B^TP^1)=1$ and for which $P^1_{A|x'}(1)=1$. Let $P^2$ be
  another local deterministic distribution that is identical to $P^1$
  except that $P^2_{A|x'}(0)=1$ and $P^2_{B|y}(0)=1$ for all
  $y\in\{m_B-g+1,\ldots,m_B\}$. It follows that
  $1\leq\tr(B^TP_2)\leq\tr(B^TP_1)=1$, i.e., $P^2$ saturates the Bell
  inequality if $P^1$ does.

  It also follows that
  \begin{itemize}
  \item For $X=x'$ and any $Y\in\{1,\ldots,m_B-g\}$ the corresponding
    $2\times2$ blocks of $B$ have the form
    $\left(\begin{array}{cc}0&0\\0&\gamma\end{array}\right)$ or
    $\left(\begin{array}{cc}0&0\\\gamma&0\end{array}\right)$
    (depending on whether $P^1_{B|y}(0)=1$ or $P^1_{B|y}(1)=1$).
  \item For $X=x'$ and $Y\in\{m_B-g+1,\ldots,m_B\}$ the analogous blocks of $B$ have the form
    $\left(\begin{array}{cc}0&\gamma_1\\\gamma_2&0\end{array}\right)$ or
    $\left(\begin{array}{cc}0&\gamma_1\\0&\gamma_2\end{array}\right)$.
  \end{itemize}
  We can therefore replace the $2\times2$ blocks of $\tilde{\Pi}$
  corresponding to $X=x'$ by either $A$, $S$ or $L$ (whichever matches
  the zeros of $B$) without changing the value of $\tr(B^T\tilde{\Pi})$.  In
  other words, if there exists a local deterministic distribution,
  $P^1$, such that $\tr(B^TP^1)=1$ and for which $P^1_{A|x'}(1)=1$,
  then there exists an extremal no-signalling distribution, $\Pi'$,
  with a smaller value of $h$ that also satisfies $\tr(B^T\Pi')=0$.

  If we can reduce in this way until $h=0$, then either $g=0$ are we
  are done, or we can complete the argument using Case~2.
  Alternatively, we reduce to the case where $\tr(B^TP)=1$ implies
  $P_{A|x}(0)=1$ for all $x\in\{m_A-h+1,\ldots,m_A\}$, and
  $P_{B|y}(0)=1$ for all $y\in\{m_B-g+1,\ldots,m_B\}$. The affine span
  of the local deterministic distributions satisfying $\tr(B^TP)=1$ is
  hence the same as the affine span of those satisfying
  $\tr(\tilde{B}^T\tilde{P})=1$, where $\tilde{B}$ and $\tilde{P}$
  comprise the first $2(m_A-h)$ rows and $2(m_B-g)$ columns of $B$ and
  $P$ respectively.  This is at most
  $((m_A-h)(n_A-1)+1)((m_B-g)(n_B-1)+1)-2$, and hence contradicts the
  assumption that $B$ is a facet inequality.
\end{proof}

To get the idea of the proof, let us consider an example.  Take
$\tilde{\Pi}=\left(\begin{array}{ccc}S&S&L\\S&A&L\\K&K&M\end{array}\right)$.
A Bell inequality with $\tr\left(B^T\tilde{\Pi}\right)=0$ must have the form
$$B=\left(\begin{array}{cc|cc|cc}0&v_1&0&v_2&0&v_3\\ v_4&0&v_5&0&0&v_6\\
\hline
0&v_7&v_8&0&0&v_9\\ v_{10}&0&0&v_{11}&0&v_{12}\\
\hline
0&0&0&0&0&v_{13}\\ v_{14}&v_{15}&v_{16}&v_{17}&v_{18}&v_{19}\end{array}\right),$$
        where $v_i$ denotes an arbitrary non-negative entry.

        Consider now the local deterministic distributions
        $$P^1=\left(\begin{array}{cc|cc|cc}1&0&1&0&1&0\\
                      0&0&0&0&0&0\\
                      \hline
                      1&0&1&0&1&0\\
                      0&0&0&0&0&0\\
                      \hline
                      0&0&0&0&0&0\\
                      0&1&0&1&1&0\end{array}\right),\quad P^2=\left(\begin{array}{cc|cc|cc}1&0&1&0&1&0\\
                      0&0&0&0&0&0\\
                      \hline
                      1&0&1&0&1&0\\
                      0&0&0&0&0&0\\
                      \hline
                      1&0&1&0&1&0\\
                      0&0&0&0&0&0\end{array}\right),\quad \hat{P}^1=\left(\begin{array}{cc|cc|cc}1&0&1&0&0&1\\ 0&0&0&0&0&0\\
                      \hline
                      1&0&1&0&0&1\\ 0&0&0&0&0&0\\
                      \hline
                      0&0&0&0&0&0\\ 0&1&0&1&0&1\end{array}\right).$$

If $\tr(B^TP^1)=1$ then we must have $\tr(B^TP^2)=1$, and hence
$v_{15}=v_{17}=v_{18}=0$.  In this case the distribution
$\Pi'=\left(\begin{array}{ccc}S&S&L\\S&A&L\\S&S&L\end{array}\right)$
will also satisfy $\tr(B^T\Pi')=0$. We can then apply the argument of case 2 to this distribution.

Similarly, if $\tr(B^T\hat{P}^1)=1$ then we must have $\tr(B^TP^2)=1$,
and hence $v_3=v_9=v_{15}=v_{17}=v_{19}=0$.  In this case the
distribution
$\Pi'=\left(\begin{array}{ccc}S&S&A\\S&A&A\\S&S&S\end{array}\right)$
will also satisfy $\tr(B^T\Pi')=0$.  Note that by relabelling the
outputs for $y=3$ corresponds to exchanging the $A$ entries for $S$s
in the final column of $\Pi'$, bringing us into a form
matching~\eqref{eq:nsform}.

By arguments of this kind it follows that either we can reduce to a
case with a lower value of $g$ or $h$, or all the local deterministic
distributions with $\tr(B^TP)=1$ have zeros in the final row and
column ($P_{X|a=3}(0)=1$ and $P_{Y|b=3}(0)=1$).  In the latter case
the dimension of the plane containing the saturating local deterministic distributions is insufficient for $B$ to be a facet Bell inequality.

\section{Generating Quantum Distributions}\label{App:QDis}
As stated in the main body, one may use our linear programming
algorithm using quantum distributions rather than extremal
no-signalling ones. To generate these, one must first fix the
dimension of the state $d$. One then creates a normalised real vector
$\boldsymbol{\lambda}$ of $d$ non-zero real elements. We take these as
the Schmidt coefficients of the pure entangled state
$\ket{\phi}=\sum_{i=1}^d\lambda_i\ket{i}\ot\ket{i}$.  We then generate
$m_A+m_B$ random unitaries
$\left\{U^{A_1},U^{A_2},\ldots, U^{A_{m_A}},U^{B_1},U^{B_2},\ldots,
  U^{B_{m_B}}\right\}$, each unitary corresponding to a
measurement. Since the columns of each $U^i$ are orthonormal, we can
define projection operators
$P_k^i := \sum_{k\in\mathcal{S}_n^i}U^i\proj{k}(U^i)^\dagger$, where
$\{\mathcal{S}^i_n\}$ is a partition of $\{1,2,\ldots,n_A\}$ or
$\{1,2,\ldots,n_B\}$ as appropriate. These projectors satisfy
$\sum_k P^{i}_k=\left(U^i\right)^\dagger U^i=\mathbb{I}_d$. Thus, we
obtain the probability distribution:
\begin{equation}
P_{AB|xy}(a,b)=\bra{\phi}(P^{A_x}_a\otimes P^{B_y}_b)\ket{\phi}.
\end{equation}
We may then use this as the objective function. Note that this construction does not guarantee a non-local distribution.

\section{Guaranteeing All Inequality Classes}\label{App:Guar}
Although for the scenarios in which the number of inequality classes
was known, Algorithm 1 was able to generate a representative of every
class, due to the degeneracy of the optimal solutions it does not
guarantee that this generally will be the case. In this appendix we
discuss an alteration to the algorithm that can, in principle, provide
this.  However, the run time of such an algorithm is prohibitive.  We
nevertheless state the method here, because it gives the idea behind
Algorithm~1.

Suppose we have a specific Bell inequality $B$ we wish to find as a
solution to~\eqref{eq:dual}. Let $\tilde{\Pi}$ be an extremal no-signalling
distribution such that $\tr(B^T\tilde{\Pi})=0$ and let $\{P_j\}_{j=1}^d$
be a set of linearly independent local deterministic distributions
saturating $B$ (where $d$ is the dimension of the local polytope).  If
we define $\Pi'=\left(1-\delta\right)\tilde{\Pi}+\delta\sum_{j=1}^d P_j/d$,
then $\tr(B^T\Pi')=\delta$.

Furthermore, the matrix $B$ represents the unique Bell inequality that
achieves the minimum solution to the dual problem~\eqref{eq:dual} with
input $\Pi'$.  To see this, note that no matrix $M$ for which
$\tr(M^TP)\geq 1$ is a Bell inequality can give a lower value because
$\tr(M^T\Pi)$ is bounded below by 0 and $\tr(M^TP_i)\geq1$. In
addition, no other matrix $\hat{M}$ with non-negative entries for
which $\tr(\hat{M}^TP)\geq1$ is a different Bell inequality will
achieve this value because
$\tr(\hat{M}^T\Pi')\geq\delta\sum_{j=1}^d\tr(\hat{M}^TP_j)/d$.  It
cannot be that $\tr(\hat{M}^T P_j)=1$ for all $j\in\{1,\ldots,d\}$
because this would mean $\{P_j\}_{j=1}^d$ also all lie on the facet
formed by $\hat{M}$, which is only possible if the facets are
identical.

It follows that a Bell inequality of every class can be generated by
considering all possible objective functions of the form
$\left(1-\delta\right)\tilde{\Pi}+\delta\sum_{j=1}^d P_j/d$ where
$\tilde{\Pi}$ is an extremal no-signalling point of the form in
Eq.~(\ref{eq:nsform}) and $\{P_j\}_{j=1}^d$ are linearly independent
deterministic local distributions.  However, this algorithm is
impractical for all but the smallest cases.  In the case $n_A=n_B=2$,
we wish to choose $d=(m_A+1)(m_B+1)-1$ local deterministic
distributions from $2^{m_A+m_B}$ possible choices and the number of
ways to do this scales roughly as $2^{m_Am_B(m_A+m_B)}$, which is
prohibitively large. This is why in Algorithm~1 we mix each
no-signalling point with only two local deterministic distributions.
In this algorithm the local deterministic distributions that are mixed
with are taken from those that saturate the Bell inequality $M$ found
in Step~\ref{st1:3}.  Because of Remark~\ref{rmk:1}, it is possible
that the Bell inequality $M'$ that is found in Step~\ref{st1:7} is
equal to $M$, since not all degeneracies are broken.  In practice
though, it turns out this small mixing leads the linear programming
algorithm to output different inequalities. For small values of $m_A$
and $m_B$ the algorithm can be run in reasonable time.

\section{Using the Polar Dual}\label{App:Polar}
One disadvantage of our algorithm is that not all solutions of the
linear programming problem are facet inequalities. This increases the
calculation time because we have to check the affine dimension of
every solution and discard many non-facet cases. In this appendix we
consider an alternative approach which guarantees facet outputs, by
taking advantage of the \emph{polar dual} of the local polytope.
\begin{definition}
  Given a polytope $\cP\subset\mathbb{R}^{s\times t}$, its \emph{polar
    dual}\footnote{In the study of convex bodies, there are other
    types of dual, which we do not use here.} is the set of points:
\begin{equation}
\cP^\star:=\left\{B\in\mathbb{R}^{s\times t}\,\middle\vert\,\tr(B^T\Pi)\leq 1\;\forall\;\Pi\in\cP\right\}.
\end{equation}
\end{definition}
If the co-ordinate origin is interior to $\cP$ (i.e., $\Pi=0$ is
a non-boundary element of $\cP$), then the polar dual satisfies
$\left(\cP^\star\right)^\star=\cP$, and the two are
linked by the following~\cite{G1967}
\begin{align}
&\text{H-representation} & &\text{V-representation}\nonumber\\
&\cP=\left\{\Pi\in\mathbb{R}^{s\times t}\,\middle\vert\,\tr(B_j^T\Pi)\leq1\;\forall\;j\right\} & &\cP=\left\{\Pi=\sum_i\lambda_i\Pi_i\,\middle\vert\,\sum_i \lambda_i=1,\;\lambda_i\geq 0\right\}\\
&\cP^\star=\left\{B\in\mathbb{R}^{s\times t}\,\middle\vert\,\tr(B^T\Pi_i)\leq 1\;\forall\;i\right\}&&\cP^\star=\left\{B=\sum_j\lambda_jB_j\,\middle\vert\,\sum_j\lambda_j=1,\;\lambda_j\geq 0\right\}\,,
\end{align}
where $\Pi_i,B_i\in\mathbb{R}^{s\times t}$.  Hence, there is a
one-to-one correspondence between the vertices of the primal and the
facets of the dual, and vice versa.

The motivation for considering the polar dual is that by optimizing with the polar dual of the local polytope as the solution space, a simplex algorithm will always give a solution corresponding to a facet of the local polytope.

In order to use this form of the polar dual we require the origin to
lie in the interior. We hence perform a translation of coordinates. A
natural choice of the new origin is the distribution $\Pi^u$ whose
entries are all $1/n_An_B$. For example, in the $(2,2,2,2)$ scenario
this would map the extremal no-signalling distribution
$$\left(\begin{array}{cc|cc}
\frac{1}{2} & 0 & \frac{1}{2} & 0 \\
0 & \frac{1}{2}\vphantom{\frac{1}{f}} & 0 & \frac{1}{2}\\
\hline
\frac{1}{2} & 0 & 0 & \frac{1}{2} \\
0 & \frac{1}{2} & \frac{1}{2} & 0 
\end{array}\right)\ \text{to}\ \left(\begin{array}{cc|cc}
                                \phantom{-}\frac{1}{4} & -\frac{1}{4} & \phantom{-}\frac{1}{4} & -\frac{1}{4}  \\
                                -\frac{1}{4}  & \phantom{-}\frac{1}{4} & -\frac{1}{4}  & \phantom{-}\frac{1}{4}\vphantom{\frac{1}{f}}\\
 \hline
                                \phantom{-}\frac{1}{4} & -\frac{1}{4}  & -\frac{1}{4}  & \phantom{-}\frac{1}{4} \\
                                -\frac{1}{4} & \phantom{-}\frac{1}{4} & \phantom{-}\frac{1}{4} & -\frac{1}{4}
\end{array}\right).$$
Note that this representation of a distribution can have negative
entries.  The linear programming problem we wish to solve is then
\begin{align}
  \max_M\ \ &\tr(M^T\overrightarrow{\tilde{\Pi}}) \nonumber\\
  \text{subject to }\ &\tr(M^T\overrightarrow{P}^{\rL,i})\leq 1 \text{ for all }i.\label{eq:dualshift}
\end{align}
where $\overrightarrow{\tilde{\Pi}}$ refers to distribution $\tilde{\Pi}$ after shifting
origin.  The canonical form of a linear program requires positive
entries.  To take care of this, we can write $M=M_+-M_-$ where
$M_+,M_-\geq 0$ component-wise, making the problem
\begin{align}
  \max_{M_+,M_-}\ \ &\tr(M_+^T\overrightarrow{\tilde{\Pi}})-\tr(M_-^T\overrightarrow{\tilde{\Pi}}) \nonumber\\
  \text{subject to }\ &\tr(M_+^T\overrightarrow{P}^{\rL,i})-\tr(M_-^T\overrightarrow{P}^{\rL,i})\leq 1 \text{ for all }i\label{eq:dualpm}\\
  &M_+,M_-\geq 0\,.\nonumber
  \end{align}

  Although the solution space is technically unbounded due to rays of
  the form $(M_+)_{ij}=(M_-)_{ij}$, these cannot contribute to the
  objective function, and the problem~(\ref{eq:dualpm}) is
  bounded. However, our conversion of the problem into the
  form~(\ref{eq:dualshift}) means we no longer have a \emph{known}
  bound on the objective function\footnote{In the original form, we
    had that $\tr(M^T\tilde{\Pi})\geq 0$.}. To illustrate the problem with
  this, we perform the optimisation~(\ref{eq:dualpm}), for all
  extremal non-local no-signalling distributions. For the $(4,4,2,2)$
  scenario, the largest value obtained is $12/5$, and the smallest is
  $2$. (Unlike in problem~(\ref{eq:dual}), the optimal value varies
  depending on the extremal no-signalling point used.)\bigskip

We now consider a particular $(4,4,2,2)$ Bell inequality of the form:
\begin{equation}
\overrightarrow{B}_{\mathrm{ex}}=\left(
\begin{array}{cc|cc|cc|cc}
 0 & -\frac{4}{5} & 0 & 0 & 0 & 0 & 0 & -\frac{4}{5} \\
 0 & 0 & -\frac{4}{5}\vphantom{\frac{1}{f}} & 0 & 0 & 0 & 0 & 0 \\
 \hline
 0 & 0 & 0 & 0 & 0 & -\frac{4}{5} & 0 & 0 \\
 -\frac{4}{5} & 0 & 0 & -\frac{4}{5}\vphantom{\frac{1}{f}} & 0 & 0 & 0 & 0 \\
 \hline
 0 & 0 & 0 & 0 & 0 & 0 & 0 & 0 \\
 0 & 0 & 0 & 0 & 0 & 0 & 0 & 0 \\
 \hline
 0 & 0 & 0 & -\frac{4}{5} & 0 & 0 & -\frac{4}{5} & 0 \\
 0 & 0 & 0 & 0 & -\frac{4}{5} & 0 & 0 & 0 \\
\end{array}
\right),\quad \tr\left(\overrightarrow{B}_{\mathrm{ex}}^T\overrightarrow{\Pi}\right)\leq 1,
\end{equation}
and the corresponding problem of maximizing
$\tr(\overrightarrow{B}_{\mathrm{ex}}^T\overrightarrow{\tilde{\Pi}})$ over all
$\overrightarrow{\tilde{\Pi}}\in\overrightarrow{\cNS}_{(m_A,m_B,n_A,n_B)}$.
Performing this optimisation we find the maximal value to be
$9/5<2$. We can therefore conclude there is \emph{no} extremal
no-signalling point that, when used as the objective function for the
problem~(\ref{eq:dualshift}), will give the solution
$M=\overrightarrow{B}_{\mathrm{ex}}$.  By using the polar dual, if
$\eps$ is too small there are certain facets that Algorithm~1 (with
this new linear programming problem) will never output, and it is not
clear whether $\eps$ can be chosen in such a way that all facets could
in principle be output. However, this restriction does not apply to
Algorithm 2, for which this translation may be useful.

\section{Finding the Unlifted Form of a Bell Inequality}\label{App:BellDim}
Given a facet inequality $B$, we would like to know whether it is lifted
from a lower dimensional scenario. Whilst it is sometimes easy to tell
a lifting by inspection, because there are many ways to represent the
same Bell inequality (as discussed in Section~\ref{probandbell}), this
is not always the case. Here we present an algorithm that finds the
smallest scenario from which the inequality has been lifted, and gives
the inequality in the smallest scenario. This algorithm uses two
subalgorithms, the first of which checks for lifted outputs, and the
second of which removes lifted inputs.

The idea here is that output liftings correspond to copying the
coefficients corresponding to one of the existing outputs.  This means
that, in the resulting inequality, two local deterministic
distributions that are identical up to the choice of this output give
the same Bell value.

\begin{enumerate}
\item Given a facet Bell inequality and an input $x$, this checks
  whether any of the outputs are lifted. To do so the algorithm runs
  through all pairs of outputs $a_1$ and $a_2$ corresponding to the
  input $x$.  If for all local deterministic distributions that use
  the output $a_1$ for input $x$ the Bell value is the same as for the
  local deterministic distribution that is identical except that it
  uses the output $a_2$ for $x$, then the output $a_2$ can be removed
  for input $x$.  Having checked all pairs of outputs, the algorithm
  returns the set of outputs that can be removed, $O_j$.

\item Given a facet Bell inequality for which one of the inputs has
  only one possible output, this returns a new Bell inequality for the
  scenario with one fewer input without altering the bound on the Bell
  inequality and with the property that the input Bell inequality is a
  lifting of the returned inequality.  This works by adding
  no-signalling type matrices to the matrix form of the Bell
  inequality in such a way as to make all of the entries corresponding
  to the input with only one output equal to zero.  [This can be
  achieved with a no-signalling type matrix whose only non-zero
  entries are in a single row in which the entry corresponding to
  the input to be removed takes value $-v$ and the entries
  corresponding to another input with $d$ outputs has entries
  $(v,v,\ldots,v)$.] Once all the entries are zero the new Bell
  inequality is formed by removing all entries corresponding to the
  input.
\end{enumerate}

\subsubsection*{Algorithm 4}
This algorithm starts with a matrix representation $B$ of a facet Bell
inequality, and the scenario in which it was found, $S=[(n_A^1\ n_A^2\
\ldots\ n_A^{m_A})\ (n_B^1\ n_B^2\ \ldots\ n_B^{m_B})]$.
\begin{enumerate}
\item Set $j=m_A$, $S'=S$ and $B'=B$.
\item \label{start} Apply subalgorithm~1 to the
  $j^{\text{th}}$ input of Alice.
\item If the output set $O_j$ is non-empty, update $B'$ by removing the
  entries in the matrix corresponding to the outputs
  in set $O_j$ for the $j^{\text{th}}$ input and update
  $S'$ by replacing $n_A^j$ by $n_A^j-|O_j|$.
\item If the $j^{\text{th}}$ input of Alice has only one possible
  output, update $B'$ using subalgorithm~2 and update $S'$ by removing $n_A^j$.
\item Set $j=j-1$.\label{end}
\item Repeat steps \ref{start}-\ref{end} until $j=0$. 
\item Set $j=m_B$.
\item Repeat steps \ref{start}-\ref{end}, but with the roles of Alice
  and Bob interchanged, until $j=0$.
\item Return $B'$ and $S'$.
\end{enumerate}
This algorithm will return a facet Bell inequality $B'$ and a new
scenario $S'$ reduced such that $B'$ contains no lifted inputs and
outputs, and hence $S'$ is the smallest scenario for which $B$ can be
formed from a lifting. Note also that the local bounds for $B$ and $B'$
are the same and that if $B$ has all positive entries, so does $B'$.
\end{document}